
\documentclass[12pt,draftclsnofoot, onecolumn]{IEEEtran}

\usepackage{algorithm}
\usepackage{algorithmic}
\usepackage{amsmath}
\usepackage{amssymb}
\usepackage{latexsym}
\usepackage{multirow}
\usepackage{epsfig}
\usepackage{graphics}
\usepackage{graphicx}
\usepackage{mathrsfs}
\usepackage{subfigure}
\usepackage{bbding}
\usepackage{epic}
\usepackage{curves}
\usepackage{stfloats}
\usepackage{epstopdf}
\usepackage{cite}
\usepackage{stfloats}
\usepackage{color}

\newcommand{\beq}{\begin{equation}}
	\newcommand{\enq}{\end{equation}}
\newcommand{\ben}{\begin{eqnarray}}
	\newcommand{\enn}{\end{eqnarray}}
\newcommand{\bei}{\begin{itemize}}
	\newcommand{\eni}{\end{itemize}}

\newcommand{\bm}[1]{\mbox{\boldmath{$#1$}}}
\newtheorem{proposition}{Proposition}
\newtheorem{lemma}{Lemma}
\newtheorem{remark}{Remark}
\newtheorem{proof}{Proof}
\makeatletter
\newcommand{\figcaption}{\def\@captype{figure}\caption}
\newcommand{\tabcaption}{\def\@captype{table}\caption}
\makeatother
\date{}


\begin{document}
	"This work has been submitted to the IEEE for possible publication.  Copyright may be transferred without notice, after which this version may no longer be accessible. "
\title{ 	
	 \hfill{\em\small{}}\\
	 \LARGE 
	 Lifetime Maximization for UAV-Enabled Cognitive-NOMA IoT Networks: Joint Location, Power, and Decoding Order Optimization
%
%

 \author{Na~Tang, Hongying~Tang, Tho~Le-Ngoc (Life Fellow, IEEE), Baoqing~Li, Xiaobing~Yuan}
 \thanks{N. Tang is with the Science and Technology on Microsystem Laboratory,	Shanghai Institute of Microsystem and Information Technology, Chinese	Academy of Sciences, Shanghai 200050, China, and also with the University of Chinese Academy of Sciences, Beijing 100049, China (e-mail: tangna@mail.sim.ac.cn).}
 \thanks{Hongying Tang, Baoqing Li, and Xiaobing Yuan are with the Science	and Technology on Microsystem Laboratory, Shanghai Institute of Microsystem and Information Technology, Chinese Academy of Sciences, Shanghai 200050, China (e-mail: tanghy@mail.sim.ac.cn; sinoiot@mail.sim.ac.cn;	sinowsn@mail.sim.ac.cn).}
  \thanks{Tho~Le-Ngoc is with the Department of Electrical and Computer Engineering, McGill University, Montreal, QC H3A 0G4, Canada (email: tho.le-ngoc@mcgill.ca).}}

\maketitle        

\begin{abstract}
This paper investigates a cognitive unmanned aerial vehicle (UAV) enabled Internet of Things (IoT) network, where secondary/cognitive IoT devices upload their data to the UAV hub following a non-orthogonal multiple access (NOMA) protocol in the spectrum of the primary network.
We aim to maximize the minimum lifetime of IoT devices by jointly optimizing the UAV location,  transmit power, and decoding order subject to interference-power constraints in presence of the imperfect channel state information (CSI).
To solve the formulated non-convex mixed-integer programming problem, 
we first jointly optimize the UAV location and transmit power for a given decoding order and obtain the globally optimal solution with the assistance of Lagrange duality and then obtain the best decoding order by exhaustive search,  which is applicable to relatively small-scale scenarios.
For large-scale scenarios, we propose a low-complexity sub-optimal algorithm by transforming the original problem into a more tractable equivalent form and applying the successive convex approximation (SCA) technique and penalty function method.
Numerical results demonstrate that the proposed design significantly outperforms the benchmark schemes.
\end{abstract}

\begin{IEEEkeywords}
UAV communications, IoT, NOMA, cognitive radio, location optimization, power control.
\end{IEEEkeywords}

%
\IEEEpeerreviewmaketitle

\section{Introduction}
\IEEEPARstart{I}{nternet} of Things (IoT) has been widely used in various applications, including smart home,
health-care, manufacturing, soil monitoring, and fish farming \cite{6740844,islam2020development}. A large number of intelligent devices capable of processing and sensing are anticipated to connect to the data center,
which imposes critical challenges to spectrum resources. Due to size and cost constraints, many
IoT devices carry drastically limited battery capacity and their lifetime becomes a bottleneck to ensure long-term performance in scenarios
where replacing the device or battery is difficult and cost-prohibitive. For example, in precision
aquaculture systems, IoT devices are used to gather environmental data (such as oxygen and
temperature) to help reduce disease-related fish mortality. In this case, the available device energy
is limited by its battery. Once the battery runs out, it is unable to support stable and reliable
communications to transmit the collected data and even renders the system useless. Therefore, it
is of utmost importance to efficiently utilize spectrum and battery resources to prolong the lifetime
of IoT networks.

Cognitive radio (CR) technology has emerged as a prominent solution to spectrum scarcity issues.
The basic idea of the underlay CR paradigm is to allow secondary users to concurrently access the licensed band of primary networks under interference constraints at primary users, thereby enhancing the spectrum utilization \cite{huang2015green,7559749}.
However, CR cannot satisfy the demands of simultaneous transmission among massive devices, which arises as one of the biggest hurdles for IoT networks.

Non-orthogonal multiple access (NOMA) is proposed as an effective method to improve network access capabilities.
The main principle of power domain NOMA is to achieve time/frequency spectrum sharing and power domain multiplexing among multiple devices with the aid of superposition coding (SC) at the transmitter and successive interference cancellation (SIC) at the receiver.
In this way, NOMA provides several advantages over orthogonal multiple access (OMA), such as high spectral efficiency, massive connectivity support, and low transmission latency \cite{maraqa2020survey,8357810}.

Unmanned aerial vehicles (UAVs) have been proposed as powerful aerial base stations (BSs) to serve ground users from the sky.
With fully controllable mobility, employing UAVs to collect data from IoT devices can significantly shorten the communication distance and provide better channel quality.
Compared with the terrestrial BS, UAVs have a much higher probability to connect IoT devices through line-of-sight (LoS) links, thus providing more reliable and energy-efficient wireless transmission.
Thanks to these advantages, UAVs are perfectly matched with IoT networks, especially for devices with limited energy or installed in inaccessible areas.
Therefore, the integration of CR, NOMA, and UAV can potentially enhance the efficiency of the spectrum
and battery resources in IoT networks.
This motivates us to investigate the cognitive NOMA transmission in UAV-enabled IoT networks.

\subsection{Related Work}
UAV-assisted communications have been widely investigated in the literature, in which UAVs are mainly employed as BSs \cite{8438896,zhan2017energy,9043712} or relays \cite{9522072,9524328}.
For instance, a UAV-enabled orthogonal frequency division multiple access (OFDMA) network is investigated in \cite{8438896} to maximize the throughput by optimizing the UAV trajectory and resource allocation.
A time-division multiple access (TDMA) like wake-up scheduling scheme is proposed in \cite{zhan2017energy} with the goal of minimizing the maximum energy consumption of sensors.
For a UAV-enabled full-duplex relaying system, the work in \cite{9522072} studies the joint design of bandwidth, power allocation, and the UAV trajectory
to maximize the number of served IoT devices.

Extensive efforts have been devoted to studying the application of CR in UAV communications.
There has been a proliferation of researches on employing the UAV to assist CR networks \cite{9070201,9082700,9364745}.
Specifically, the work in \cite{9070201} uses the UAV as a relay to cooperate the communication of secondary users in energy-harvesting-based cognitive networks, while the work in \cite{9364745} deploys the UAV as a friendly jammer to interfere with eavesdroppers.
Moreover, there has been an increasing interest in cognitive UAV-enabled communications \cite{8776639,9233353,9014324}.
In \cite{8776639}, the UAV location/trajectory and transmit power are jointly optimized to maximize the secondary users' rates for quasi-stationary and mobile UAV scenarios, respectively.
Ultra-reliable and low-latency communications are investigated in \cite{9014324} to maximize the minimum finite block-length rate by optimizing the UAV altitude and power allocation.

Exploiting the attractive features of NOMA to further unlock the full potential of UAV communications has attracted great attention.
Recent studies have investigated various objectives that include rate \cite{9126800,9257576,9509753,9194041,8848428}, UAV transmit power\cite{8848428,9113466}, energy efficiency \cite{9411713}, number of served users\cite{8918266}, as well as secure transmission\cite{tang2020cognitive,8988182,9080059}.
Among them, a data collection protocol is proposed in \cite{9126800} to maximize the sum rate of a wireless sensor network by jointly optimizing the UAV placement, sensor grouping, and power control.
The backhaul connectivity of the UAV-BS is considered in \cite{9113466}, where location and resource allocation are jointly optimized to minimize the UAV transmit power.
\cite{8848428} studies the joint optimization problem of location, decoding order, and power allocation with two goals of minimizing the UAV transmit power and maximizing the achievable rate of a user.
Reference \cite{9411713} investigates the joint resource allocation and UAV trajectory optimization for maximizing the total energy efficiency with the quality of service (QoS) requirement.
To maximize the number of users with satisfied QoS, the work in \cite{8918266} studies the UAV location design, admission control, and power allocation for NOMA transmission.
In \cite{8988182}, two schemes are proposed to improve the security of NOMA-UAV networks via power allocation and beamforming optimization, respectively.
In \cite{tang2020cognitive}, the secrecy sum-rate maximization problem is studied for downlink transmission in a UAV-enabled cognitive NOMA network.
However, these works largely ignore the lifetime performance in NOMA-based UAV communications.

There exist quite a few works focusing on lifetime maximization in UAV-enabled communication systems.
For disaster situations, the placement problem of UAVs is studied in \cite{shakhatreh2019uavs} to maximize the minimum device lifetime, where the lifetime is defined as the uplink transmission time until the first wireless device runs out of energy.
The work in \cite{8886053} further optimizes device association, power, bandwidth allocation, and UAV deployment for maximizing the minimum lifetime in a multi-UAV-enabled communication system.
However, \cite{shakhatreh2019uavs} and \cite{8886053} adopt the frequency division multiplexing access (FDMA) technique for uplink transmission and their results cannot be directly applied to NOMA systems, due to different resource allocation mechanisms.
For an ocean monitoring network with NOMA based device-to-UAV transmission, the authors in \cite{ma2021uav} study the multiple access, resource allocation, and UAV deployment issues to maximize the number of data collection cycles supported by the residual energies of underwater sensors, which is related to the above-defined the lifetime.
 However, in their model, the UAV location is first determined by sensors' positions and thus the decoding order is fixed based on channel gains of devices, which is strictly sub-optimal.
Up to now, the joint optimization of UAV location, transmit power, and decoding order for improving the lifetime performance of NOMA networks has not been fully studied.

\subsection{Contributions and Organization}
In this paper, we investigate the cognitive NOMA transmission in UAV-enabled IoT networks, where cognitive IoT devices transmit data to the secondary UAV with NOMA by accessing the spectrum allocated to the primary users.
Our goal is to maximize the minimum lifetime of secondary IoT devices by jointly optimizing the UAV location, transmit power, and decoding order subject to the QoS and interference constraints in the presence of the imperfect channel state information (CSI) between secondary IoT devices and primary users.
The main contributions of this paper can be summarized as follows.
\begin{itemize}
	\item Although the formulated problem is non-convex mixed-integer and challenging to tackle, we solve it optimally by developing the Lagrange-duality-based algorithm. Specifically, for any given decoding permutation order, the optimal power and location are obtained in semi-closed forms. Then the problem resorts to enumerating all the possible decoding orders and determining the best performer that yields the highest lifetime. Note that the Lagrange-duality-based algorithm is applicable for small-scale scenarios in practice and can also serve as a benchmark for evaluating other algorithms.
	\item Furthermore, we also propose a low-complexity iterative algorithm, which can obtain a sub-optimal solution for large-scale scenarios. Specifically, by introducing binary variables to replace the complicated permutation variables, we construct an equivalent but more tractable problem. Then we derive a closed-form solution of the optimal power, by decoupling the mutual relationship between each user in the non-convex QoS constraints. Finally, the SCA technique and penalty function method are applied to approximate the location and decoding order problem into a series of convex problems.  
	\item 	Numerical results demonstrate that the proposed joint optimization of UAV location, transmit power, and decoding order significantly outperforms the benchmark schemes. Besides, compared with the Lagrange-duality-based algorithm that achieves the optimal lifetime, the SCA-based iterative algorithm shows a slight performance degradation, which demonstrates that the SCA-based iterative algorithm is sub-optimal.
\end{itemize}

	It is worth mentioning that different from previous works that only adopt binary variables \cite{zhang2019optimal,lu2020uav,tang2020cognitive} or permutation variables \cite{8848428,8685130,8918266,xu2020joint,nguyen2018novel} to formulate the decoding order in NOMA transmission,   our work unifies these two common decoding order formulations for the first time by proving the equivalence between the original problem and the newly constructed problem.

	The rest of this paper is organized as follows. Section \ref{sec: lifetime} introduces the system model and problem formulation for the cognitive NOMA  transmission in UAV-enabled  IoT networks.  Section \ref{sec: sub-optimal} and Section \ref{sec: optimal} propose optimal and sub-optimal solutions to the formulated problem, respectively.
	In Section \ref{sec: result}, numerical results are presented to validate the effectiveness of the proposed joint design. Finally, we conclude the paper in Section \ref{sec: conclusion}.
	
	\emph{Notations}:
	In this paper, we use italic letters to denote scalars and bold lowercase letters to denote vectors.
	$\mathbb{R}^{M \times 1}$ is the
	space of M-dimensional real-valued vectors. For a scalar $x$, $ |x| $ is the absolute value.
	For a vector $\mathbf{y}$, $ ||\mathbf y|| $ is the Euclidean norm, and $\mathbf y^T$ denotes the transpose.


%
\begin{figure}[!t]
	\centering
	\includegraphics[width=3in]{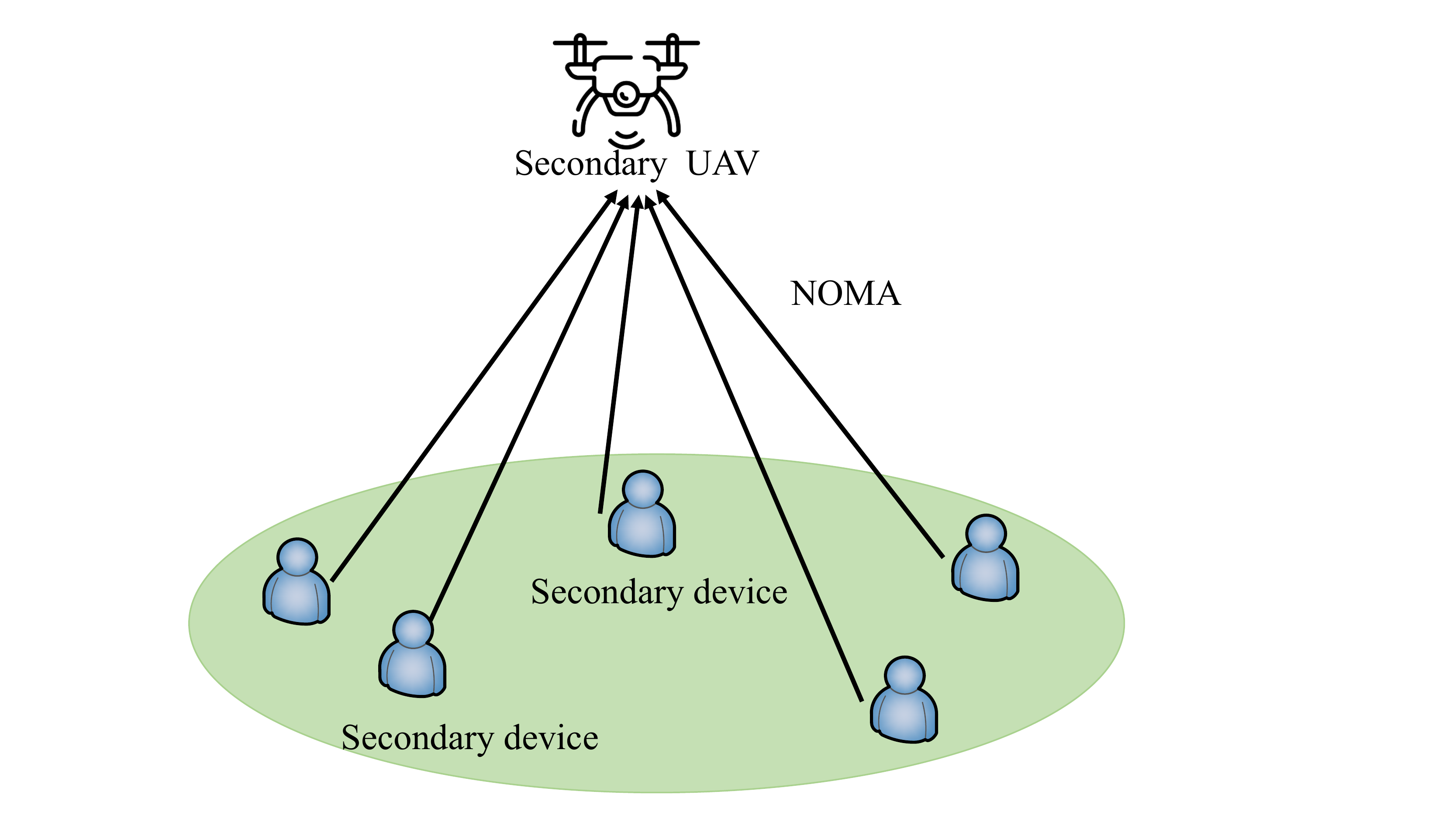}
	\caption{ A UAV-enabled secondary IoT network.}\label{fig: systemmodelifetime}
\end{figure}

\section{System Model and Problem Formulation}\label{sec: lifetime}
\subsection{System Model}\label{sec: lifetimesystemmodel}
As shown in Fig. \ref{fig: systemmodelifetime}, we consider a spectrum sharing scenario for a UAV-enabled IoT network, where
$K$ secondary IoT devices upload data to the secondary UAV with NOMA over the licensed spectrum resource of the primary users.
Let $\mathcal{K} = \{1,2,\cdots,K\}$ denote the set of ground IoT devices.
Without loss of generality, we adopt a three-dimensional ($3$D) Cartesian coordinate system such that the horizontal coordinate of ground IoT device $k$ is denoted as $\mathbf{w}_k \in \mathbb{R}^{2 \times 1}$, $k \in \mathcal{K}$.
Furthermore, we assume that the UAV is deployed at a constant altitude $H$ with horizontal coordinate $\mathbf{q}\in \mathbb{R}^{2 \times 1}$.
In this paper, we consider offline optimization by assuming
that the UAV can get the specific location information of IoT devices in advance.


We assume that IoT devices are located outdoors in open areas (e.g., river or farm) with limited blockage or scattering and the UAV-to-device communication channels are mainly dominated by the LoS link  \cite{8438896,9485092}.
Therefore, the UAV can easily obtain CSI of devices based on location information \cite{9043712}.
In addition, we assume that the Doppler effect due to UAV mobility can be fully compensated.
Following the free-space path loss model, the channel power gain from device $k$ to the UAV can be expressed as
\begin{align}
	{h_{k}}= \rho_{0} d_{k}^{-2}=\frac{\rho_0 }{||\mathbf{q}- \mathbf{w}_{k}||^2+H^2}, \label{equ: powergain1}
\end{align}
where $\rho_0$ denotes the channel power gain at the reference distance $d_0 = 1 $ m.

In uplink NOMA transmission, the UAV receives a superimposed signal of all devices and performs SIC to decode each signal.
The device's signal with the highest channel gain (referred to as the strongest signal) is first decoded and then is subtracted from the aggregate signal.
The same procedure is followed by the next strongest signal \cite{ali2016dynamic}.
Consequently, the device with the highest channel gain (referred to as the strongest device) experiences interference from all devices, and the device with the lowest channel gain (referred to as the weakest device) effectively experiences interference-free transmission.
For ease of exposition, denote the SIC decoding order at the UAV as $\bm{\pi} = \{\pi(1), \pi(2),\cdots,\pi(K)\}$, where $\pi(m)=k$ means that the signal of device $k$  is the $m$-th one to be decoded.
As a result, the above constraints can be equivalently modeled as
\begin{align}
	&	\{\pi(1),\pi(2),\cdots, \pi(K)\} \in \bm{\Psi},\label{equ: temp06}\\
	&	 	h_{\pi(1)}\geq \cdots \geq h_{\pi(K)},\label{equ: channelsort0}
\end{align}
where $\bm{\Psi}$ denotes the set of all the $K!$ possible permutations from $1$ to $K$.
Moreover, the signal from device $\pi(n)$ with $n>m$ is treated as interference when decoding the signal from device $\pi(m)$.
Thus, the signal-to-interference-plus-noise ratio (SINR) of device $\pi(m)$ is expressed as
\begin{align}
	\gamma_{\pi(m)} &= \frac{p_{\pi(m)}{h}_{\pi(m)}}{\sum_{n=m+1}^{K}p_{\pi(n)}{h}_{\pi(n)}+\sigma^{2}},
\end{align}
where  $p_\pi(m)$ denotes the transmit power of device $\pi(m)$ and $\sigma^2$ is the variance of Gaussian noise.
Define the device lifetime as the uplink transmission time before its battery energy is drained \cite{shakhatreh2019uavs}. Then the lifetime of device $\pi(m)$, denoted by $\tau_{\pi(m)}$ in second (s), can be expressed as
\begin{equation}
	\tau_{\pi(m)} = \frac{E_{\pi(m)}}{p_{\pi(m)}+P_c},\label{UNequ: tauquality}
\end{equation}
where $E_{\pi(m)}$ denotes the battery energy of device $\pi(m)$ and $P_c$ denotes the circuit power consumption.

We assume a flat-fading channel between IoT device $k$ and the primary user, where the channel coefficient is denoted as $c_{\text{p},k}$, which is an independent and identically distributed (i.i.d) zero mean circularly symmetric complex Gaussian (ZMCSCG) random variable, i.e., $CN(0, 1)$ \cite{5336868,9014324}.
We further assume that IoT devices only have imperfect CSI of $c_{\text{p},k}$ and carry out minimum mean square error (MMSE) estimation of $c_{\text{p},k}$, which is written as
$	c_{\text{p,k}} = \breve{c}_{\text{p,k}}+\Delta c_{\text{p,k}}$,
with $\breve{c}_{\text{p,k}}$ and $\Delta c_{\text{p,k}}$ denoting the MMSE estimation of $c_{\text{p,k}}$ and corresponding estimation error, respectively.
According to the property of MMSE estimation, $\breve{c}_{\text{p,k}}$ and $\Delta c_{\text{p,k}}$ are uncorrelated and are ZMCSCG distributed with variance $1-\sigma_e^2$ and $\sigma_e^2$, respectively.
The channel power gains are denoted as 	$z_{k}$, $\tilde{z}_{k}$ and $\Delta z_{k}$ with $z=|c|^2$, and the corresponding probability density functions can be, respectively, expressed as
\begin{align}
	&f\left(z_{ k}\right)=\exp \left(-z_{ k}\right), f\left(\tilde{z}_{\text{p,k}}\right)=\frac{1}{1-\varepsilon_{p}^{2}} \exp \left(-\frac{\tilde{z}_{\text{p,k}}}{1-\varepsilon_{p}^{2}}\right),f\left(\Delta z_{\text{p},k}\right)=\frac{1}{\varepsilon_{p}^{2}} \exp \left(-\frac{\Delta z_{\text{p},k}}{\varepsilon_{p}^{2}}\right).
\end{align}
Since IoT devices only obtain the imperfect estimation of $\tilde{z}_{\text{p},k}$, it cannot strictly ensure that the interference peak power at the primary user is below the threshold $I_{th}$.
Assume that the transmission of IoT devices is allowed to exceed the peak threshold with a certain probability $\varrho$, i.e.,
\begin{align}
\operatorname{Pr}\{p_{\pi(m)}z_{\pi(m)} \geq I_{th}\} \leq \varrho, \forall m. \label{equ: threshold1}
\end{align}

\subsection{Problem Formulation }\label{sec: lifetimeproblem}
Let  $\mathbf p=\{p_{\pi(m)}, m \in \mathcal{K}\}$.
By taking into account the QoS and interference constraints, our objective is to maximize the minimum lifetime of IoT devices via jointly optimizing the UAV location $\mathbf{q}$, transmit power $\mathbf p$, and decoding order $\bm{\pi}$.
Mathematically, the optimization problem can be formulated as
\begin{subequations}\label{equ: problem31}
	\begin{align}
		\mathop {\max }\limits_{\eta, \mathbf{q},\mathbf{p},\bm{\pi}} & \quad \eta\\	
		\text{s.t.} & \quad \frac{E_{\pi(m)}}{p_{\pi(m)}+P_c} \geq \eta,\forall m, \label{equ: tau2}\\
		&\quad 0 \leq p_{\pi(m)} \leq P_{\max},\forall m,\label{equ: power1}\\	
		&\quad \log_2(1+\gamma_{\pi(m)}) \geq r^*,\forall m,\label{equ: qos0}\\
		&\quad \eqref{equ: temp06},\eqref{equ: channelsort0},\eqref{equ: threshold1},
	\end{align}
\end{subequations}
where $P_{\max}$ is the maximum transmit power of devices and $r^*$ denotes the QoS requirement for all devices.
Due to the existence of non-convex constraints \eqref{equ: temp06}, \eqref{equ: channelsort0}, \eqref{equ: threshold1}, \eqref{equ: tau2}, and \eqref{equ: qos0} as well as the permutation variable \bm{\pi} that lacks closed-form expression, problem \eqref{equ: problem31} is a mixed-integer non-convex optimization problem, which is generally hard to solve.

To facilitate the processing of problem \eqref{equ: problem31}, we transform it into a more tractable form by exploiting the hidden convexity of the non-convex constraints in the following. 
First, based on the results in \cite{5336868}, the interference constraints \eqref{equ: threshold1} can be guaranteed by a sufficient instantaneous power constraint, which is expressed as
\begin{align}
	p_{\pi(m)} \leq \frac{I_{t h}}{z_{ \pi(m)}-\varepsilon_{p}^{2} \ln \varrho}.\label{equ: threshold2}
\end{align}
From \eqref{equ: powergain1}, constraint \eqref{equ: channelsort0} can be equivalently rewritten as
\begin{align}
	d_{\pi(1)}\leq d_{\pi(2)} \leq \cdots \leq d_{\pi(K)}.\label{equ: channelsort}
\end{align}
Moreover, we introduce the slack variable $\zeta =\frac{1}{\eta}$, by which problem \eqref{equ: problem31} can be transformed into
\begin{subequations}\label{equ: problem33}
	\begin{align}
		\mathop {\min }\limits_{{\zeta,\mathbf{q},\mathbf{p},\bm{\pi}}} &\quad \zeta\\
		\text{s.t.}& \quad p_{\pi(m)}+P_c \leq \zeta E_{\pi(m)},\forall m,\label{equ: sslacm3}\\
		&\quad 	 0 \leq p_{\pi(m)} \leq \tilde{P}_{\pi(m)} ,\forall m,\label{equ: power33}\\
		&\quad \eqref{equ: qos0},\eqref{equ: channelsort},
	\end{align}
\end{subequations}
where $\tilde{P}_{\pi(m)}\triangleq \min\{P_{\max}, \frac{I_{t h}}{z_{ \pi(m)}-\varepsilon_{p}^{2} \ln \varrho}\}$ that will be referred to as the allowable power in the following content.
Note that the non-convex constraints \eqref{equ: tau2} have been converted into linear constraints \eqref{equ: sslacm3}, which greatly reduces the computational complexity. 
Problem \eqref{equ: problem33} is still mixed-integer non-convex due to non-convex constraints \eqref{equ: qos0} and \eqref{equ: channelsort} as well as the permutation variable \bm{\pi}.
In general, it is quite challenging to solve such a problem optimally.
In the next two sections, we propose an optimal algorithm as well as a low-complexity sub-optimal algorithm for solving problem \eqref{equ: problem33} by applying the Lagrange duality, SCA technique and penalty method, respectively.

\section{Global Optimization Solution}\label{sec: optimal}
\label{sec: multicase}
In this section, we provide a globally optimal solution of problem \eqref{equ: problem33} by proposing the Lagrange-duality-based algorithm.
Specifically, for given decoding order \bm{\pi}, we first derive a closed-form expression of the optimal power $\mathbf{ p}$ and thereby reduce the joint design to the location $\mathbf{q}$ optimization problem that is convex. Then, the optimal $\mathbf{q}$ is obtained in a semi-closed form.
Finally, by searching all possible decoding orders and picking out the best performer, we obtain the global optimum to problem \eqref{equ: problem33}.

\subsection{Joint Location and Transmit Power Optimization}\label{sec: multidevicesA}
For any given decoding order $\bm{\pi}$, the UAV location and transmit power $\{\mathbf{q},\mathbf{p}\}$ can be optimized by solving the following problem.
\begin{subequations}\label{equ: problem34}
	\begin{align}
		\mathop {\min }\limits_{\zeta,\mathbf{q},\mathbf{p}} &\quad \zeta\\
		\text{s.t.}& \quad p_{\pi(m)}+P_c \leq \zeta E_{\pi(m)},\label{equ: sslack4}\forall m,\\
		&\quad 	 0 \leq p_{\pi(m)} \leq \tilde{P}_{\pi(m)} ,\forall m,\label{equ: power35}\\
		&\quad \eqref{equ: qos0}, \eqref{equ: channelsort}.
	\end{align}
\end{subequations}
Problem (29) is a non-convex optimization problem due to non-convex constraints \eqref{equ: qos0} and \eqref{equ: channelsort}. 
By exploiting problem structure and the hidden convexity, we propose an efficient algorithm based on Lagrange duality to determine the optimal $\{\mathbf{q}, \mathbf{p}\}$.
Specifically, note that constraints \eqref{equ: channelsort} can be equivalently rewritten as
\begin{align}
	||\mathbf{q}- \mathbf{w}_{\pi(1)}||^2	\leq ||\mathbf{q}- \mathbf{w}_{\pi(2)}||^2 \leq\cdots \leq ||\mathbf{q}- \mathbf{w}_{\pi(K)}||^2.\label{equ: dist32}
\end{align}
 However, due to $||\mathbf{q}- \mathbf{w}_{\pi(m)}||^2- ||\mathbf{q}- \mathbf{w}_{\pi(m+1)}||^2 = 2(\mathbf{w}_{\pi(m+1)}-\mathbf{w}_{\pi(m)})^T\mathbf{q}-||\mathbf{w}_{\pi(m+1)}||^2+||\mathbf{w}_{\pi(m)}||^2$, constraints \eqref{equ: dist32} are further transformed into
\begin{align}
	2(\mathbf{w}_{\pi(m+1)}-\mathbf{w}_{\pi(m)})^T\mathbf{q}&\leq||\mathbf{w}_{\pi(m+1)}||^2-||\mathbf{w}_{\pi(m)}||^2,\forall m=1,\cdots,K-1,\label{equ: serialdist2}
\end{align}
which are linear with respect to $\mathbf{q}$.
Moreover, constraints \eqref{equ: qos0} can be reformulated as
\begin{align}
	p_{\pi(m)}{{h}}_{\pi(m)} \geq \left(2^{r^*}-1\right)
	\left(\sum_{n= m+1}^K p_{\pi(n)}{h}_{\pi(n)}+\sigma^2\right),\forall m.\label{equ: qos1}
\end{align}
Thus, problem \eqref{equ: problem34} is equivalent to
\begin{subequations}\label{equ: problem35}
	\begin{align}
		\mathop {\min }\limits_{{\zeta,\mathbf{q},\mathbf{p}}} &\quad \zeta\label{equ: obj11} \\
		\text{s.t.}& \quad p_{\pi(m)}+P_c \leq \zeta E_{\pi(m)},\label{equ: sslack5}\forall m,\\
		&\quad \eqref{equ: power35}, \eqref{equ: serialdist2},\eqref{equ: qos1}.
	\end{align}
\end{subequations}

Combining \eqref{equ: obj11} and \eqref{equ: sslack5}, it can be verified that we have $\zeta(\mathbf{p}) =\mathop {\max }_{m}\{\frac{p_{\pi(m)}+P_c}{E_{\pi(m)}}\}$ at the optimal solution, since otherwise, we may further decrease $\zeta$ without violating constraints \eqref{equ: sslack5}. This indicates that $\zeta(\mathbf{p})$ is monotonic increasing with respect to $p_{\pi(m)}$. Thus, the optimal power $p_{\pi(m)}^*$ should always be the lower bound, which can be obtained from constraints \eqref{equ: qos1}, i.e.,
\begin{align}
	p_{\pi(m)}^*=\frac{\left( 2^{r^*}-1 \right)
		\left(\sum_{n=m+1}^{K} p_{\pi(n)}^* {h}_{\pi(n)} +\sigma^2\right)}{{h}_{\pi(m)} },\forall m.\label{equ: power3}
\end{align}
Note that the power allocations in \eqref{equ: power3} are mutually coupled with that of each device. Hence, in the following lemma, we will decouple this relationship, and give a closed-form of $p^*_{\pi(m)}$.
\begin{lemma}\label{equ: lemma31}
	Let $c_m\triangleq\frac{(2^{r^*}-1)\sigma^2}{\rho_0} 2^{(K-m)r^*}$ and then the closed-form solution of $p^*_{\pi(m)}$ is given by
	\begin{align}
		p^*_{\pi(m)} = c_m(H^2+||\mathbf{q}-w_{\pi(m)}||^2),\forall m.\label{equ: power4}
	\end{align}
\end{lemma}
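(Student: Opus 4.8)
The plan is to solve the coupled recursion \eqref{equ: power3} by backward induction on the decoding index $m$, running from $m=K$ down to $m=1$. The key device is to work not with the transmit powers $p^*_{\pi(m)}$ directly, but with the \emph{received} powers $g_m \triangleq p^*_{\pi(m)} h_{\pi(m)}$. This substitution makes the channel gains cancel out of the recursion and exposes a simple geometric structure that is otherwise hidden by the mutual coupling.

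First I would multiply \eqref{equ: power3} through by $h_{\pi(m)}$ to rewrite it as $g_m = (2^{r^*}-1)\left(\sum_{n=m+1}^K g_n + \sigma^2\right)$. Introducing the partial interference-plus-noise term $S_m \triangleq \sum_{n=m+1}^K g_n + \sigma^2$, so that $S_K = \sigma^2$ by the empty-sum convention, the recursion becomes $g_m = (2^{r^*}-1) S_m$. Since $S_{m-1} = g_m + S_m$, combining these gives the first-order relation $S_{m-1} = \left(1 + (2^{r^*}-1)\right) S_m = 2^{r^*} S_m$. Solving this linear recursion with terminal condition $S_K = \sigma^2$ yields $S_m = 2^{(K-m)r^*}\sigma^2$, and hence $g_m = (2^{r^*}-1)\, 2^{(K-m)r^*}\sigma^2$.

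Finally, I would recover the transmit power by dividing out the channel gain, $p^*_{\pi(m)} = g_m / h_{\pi(m)}$, and substitute the free-space expression $h_{\pi(m)} = \rho_0/\left(H^2 + ||\mathbf{q}-\mathbf{w}_{\pi(m)}||^2\right)$ from \eqref{equ: powergain1}. This produces $p^*_{\pi(m)} = \frac{(2^{r^*}-1)\sigma^2}{\rho_0}\, 2^{(K-m)r^*}\left(H^2 + ||\mathbf{q}-\mathbf{w}_{\pi(m)}||^2\right)$, which is exactly $c_m\left(H^2 + ||\mathbf{q}-\mathbf{w}_{\pi(m)}||^2\right)$ with $c_m$ as defined in the lemma statement.

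The computation is elementary once the passage to received powers is made, so there is no genuine analytic obstacle; the only thing requiring care is the bookkeeping of the index range and the empty-sum base case at $m=K$, where the strongest-decoded device $\pi(K)$ sees no residual interference. I would close the argument with a back-substitution check, verifying that inserting \eqref{equ: power4} into the right-hand side of \eqref{equ: power3} reproduces the recursion, which confirms mutual consistency of the closed form across all $m$. It is worth emphasizing that the resulting $p^*_{\pi(m)}$ depends on the UAV location only through $||\mathbf{q}-\mathbf{w}_{\pi(m)}||^2$, which is precisely the feature that allows the joint design to be reduced to a tractable location-only optimization in the subsequent step.
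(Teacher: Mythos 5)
Your proposal is correct and follows essentially the same route as the paper's Appendix A: both pass to the received powers $p^*_{\pi(m)}h_{\pi(m)}$, expose the geometric progression with ratio $2^{r^*}$ (you via the partial-sum recursion $S_{m-1}=2^{r^*}S_m$, the paper via $\beta_m = 2^{r^*}\beta_{m+1}$ directly), anchor it at the interference-free device $\pi(K)$, and divide out the free-space channel gain. The two derivations are equivalent up to bookkeeping.
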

\begin{proof}
	See Appendix \ref{appendix:lemma1}.
\end{proof}
With Lemma \ref{equ: lemma31}, problem \eqref{equ: problem35} reduces to the following problem for optimizing $\mathbf{q}$ only, i.e.,
\begin{subequations}\label{equ: problem36}
	\begin{align}
		&\mathop {\min }\limits_{\zeta,\mathbf{q}} \quad \zeta\\
		&\text{s.t.} \quad P_c+ c_m(H^2+||\mathbf{q}-w_{\pi(m)}||^2)\leq \zeta E_{\pi(m)},\forall m,\label{equ: dist2}\\
		&\quad\quad H^2+||\mathbf{q}-w_{\pi(m)}||^2 \leq \bar{P}_{\pi(m)}, \forall m,\label{equ: dist3}\\
		&\quad\quad \eqref{equ: serialdist2},
	\end{align}
\end{subequations}
where $\bar{P}_{\pi(m)}\triangleq \frac{1}{c_m}\min\{P_{\max}, \frac{I_{t h}}{z_{\pi(m)}-\varepsilon_{p}^{2} \ln \varrho}\} $.
Problem \eqref{equ: problem36} is convex and satisfying Slater's constraint qualification.
As a result, the optimal solution of problem \eqref{equ: problem36} can be obtained by solving its dual problem.
Denote $\bm{\lambda}$, $\bm{\mu}$, and $\bm v$ as the non-negative Lagrange multipliers for constraints \eqref{equ: dist2}, \eqref{equ: dist3}, and \eqref{equ: serialdist2}, respectively.
The Lagrange function of problem \eqref{equ: problem36} can be derived in \eqref{equ: lan} at the top of the next page.
Accordingly, the dual function is given by
\begin{align}
	f(\bm{\lambda},\bm{\mu},\bm v)=\min_{\zeta,\mathbf{q}} \mathcal{L}(\zeta,\mathbf{q},\bm{\lambda},\bm{\mu},\bm v),\label{equ: dualproblem}
\end{align}
for which the following lemma holds.
\begin{lemma}\label{lemma_dual}
	To make $f(\bm{\lambda},\bm{\mu},\bm v)$ bounded from the above, i.e., $f(\bm{\lambda},\bm{\mu},\bm v)<+\infty$, it holds that $\sum_{m=1}^{K} \lambda_{m} E_{\pi(m)}=1$.
\end{lemma}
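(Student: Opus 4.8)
The plan is to study how the Lagrangian $\mathcal{L}(\zeta,\mathbf{q},\bm{\lambda},\bm{\mu},\bm v)$ in \eqref{equ: lan} depends on the free scalar $\zeta$, since the dual function $f$ in \eqref{equ: dualproblem} is obtained by an unconstrained inner minimization over $\zeta$ and $\mathbf{q}$. The first step is to collect every term of $\mathcal{L}$ that is linear in $\zeta$. The objective contributes $+\zeta$, and among the three groups of multipliers only $\bm{\lambda}$ couples to $\zeta$, namely through the constraints \eqref{equ: dist2}, each of which contributes $-\lambda_m E_{\pi(m)}\zeta$; by contrast the constraints \eqref{equ: dist3} and \eqref{equ: serialdist2}, weighted by $\bm{\mu}$ and $\bm v$, involve $\mathbf{q}$ only. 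Hence I would rewrite $\mathcal{L}=\bigl(1-\sum_{m=1}^{K}\lambda_m E_{\pi(m)}\bigr)\zeta + g(\mathbf{q},\bm{\lambda},\bm{\mu},\bm v)$, where $g$ gathers all the $\zeta$-independent terms and is a convex quadratic in $\mathbf{q}$.

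The second step is to minimize over $\zeta\in\mathbb{R}$. Because $\mathcal{L}$ is affine in $\zeta$ and $\zeta$ is unconstrained, the inner minimum is finite only when the coefficient of $\zeta$ vanishes: if $1-\sum_{m=1}^{K}\lambda_m E_{\pi(m)}\neq 0$, then letting $\zeta\to+\infty$ or $\zeta\to-\infty$ according to the sign of this coefficient drives $\mathcal{L}\to-\infty$, so that $f(\bm{\lambda},\bm{\mu},\bm v)=-\infty$ and the resulting dual bound is degenerate. Therefore, for $f(\bm{\lambda},\bm{\mu},\bm v)$ to be bounded (equivalently, to avoid the value $-\infty$), the coefficient must be zero, which is precisely $\sum_{m=1}^{K}\lambda_m E_{\pi(m)}=1$. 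Conversely, once this equality holds the $\zeta$-term disappears entirely and $f$ reduces to $\min_{\mathbf{q}} g(\mathbf{q},\bm{\lambda},\bm{\mu},\bm v)$, a finite minimization of a convex quadratic, so the condition is exactly what boundedness demands.

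There is little deep difficulty here; the only point requiring care is the bookkeeping in the first step, that is, confirming that $\zeta$ enters $\mathcal{L}$ solely via the objective and the $\bm{\lambda}$-weighted constraints \eqref{equ: dist2}, and that no implicit restriction (such as a hidden $\zeta\ge 0$) is retained in the inner problem, so that $\zeta$ genuinely ranges over all of $\mathbb{R}$. Once this is secured, the affine-in-$\zeta$ structure makes $\sum_{m=1}^{K}\lambda_m E_{\pi(m)}=1$ an immediate consequence of boundedness, and this equality will henceforth be imposed as a feasibility constraint when solving the dual problem.
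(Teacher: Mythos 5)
Your proof is correct and follows essentially the same route as the paper: isolate the affine-in-$\zeta$ term $\bigl(1-\sum_{m=1}^{K}\lambda_m E_{\pi(m)}\bigr)\zeta$ in the Lagrangian and let the unconstrained $\zeta$ run to $\pm\infty$, forcing the coefficient to vanish. If anything, your sign bookkeeping is the more careful one — since $f$ is an infimum, a nonzero coefficient drives $\mathcal{L}\to-\infty$ and hence $f=-\infty$ (dual infeasibility), whereas the paper's proof writes $f\to+\infty$, which is a slip.
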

\begin{proof}
	If $\sum_{m=1}^{K} \lambda_{m} E_{\pi(m)} > 1$ or $\sum_{m=1}^{K} \lambda_{m} E_{\pi(m)} < 1$, then we have $f(\bm{\lambda},\bm{\mu},\bm{v})\rightarrow +\infty $ by setting $\zeta \rightarrow+\infty$ or $\zeta \rightarrow-\infty$. This contradicts that the dual function is bound, which completes the proof.
	\hfill $\blacksquare$
\end{proof}
Problem \eqref{equ: dualproblem} is convex such that the optimal solution can be obtained by applying Karush-KuhnTucker (KKT) conditions, i.e.,
\begin{align}
	\mathbf{q}^*=
	\frac{\sum_{m=1}^K( c_m\lambda_{m}+\mu_m)\mathbf{w}_{\pi(m)}+\sum_{m=1}^{K-1}v_m (\mathbf{w}_{\pi(m)}-\mathbf{w}_{\pi(m+1)}) }{\sum_{m=1}^K( c_m\lambda_{m}+\mu_m)}.\label{equ: solution_y31}
\end{align}
After obtaining $\mathbf{q}^*$ for any given $\{\boldsymbol \lambda, \bm{\mu},\bm{v}\}$, the dual problem of \eqref{equ: problem36} is given by
\begin{subequations}\label{dualproblem8}
	\begin{align}
		\max _{\boldsymbol \lambda, \bm{\mu},\bm{v}}&\quad f(\bm{\lambda},\bm{\mu},\bm{v})\\
		\text{s.t.}&\quad \sum_{m=1}^{K} \lambda_{m}E_{\pi(m)}=1,\label{equ: lambda}\\
		&\quad \bm{\lambda} \succeq 0,\bm{\mu} \succeq 0, \bm{v} \succeq 0.
	\end{align}
\end{subequations}
Problem \eqref{dualproblem8} can be solved via a subgradient-based method such as the ellipsoid method. Specifically, the subgradient of the objective function is denoted by $\boldsymbol \eta_0= [ \Delta \bm{\lambda},\Delta \bm{\mu},\Delta \bm{v}]^T$ with $\Delta \lambda_{m} = -c_m||\mathbf{q}^* -\mathbf{w}_{\pi(m)}||^2-(c_mH^2+P_c)$, $ \Delta \mu_m =-||\mathbf{q}^* -\mathbf{w}_{\pi(m)}||^2-(H^2-\bar{P}_{\pi(m)})$, and $ \Delta \bm{v}_m = -2(\mathbf{w}_{\pi(m+1)}-\mathbf{w}_{\pi(m)})^T\mathbf{q}+||\mathbf{w}_{\pi(m+1)}||^2-||\mathbf{w}_{\pi(m)}||^2, \forall m$. 
Moreover, the equality constraint \eqref{equ: lambda} is equivalent to two inequality constraints, i.e., $\sum_{m=1}^{K} \lambda_{m}E_{\pi(m)} \geq1$ and $\sum_{m=1}^{K} \lambda_{m}E_{\pi(m)} \leq1$, whose subgradients are denoted by $\boldsymbol \eta_1 = [ \Delta \bm \lambda,\Delta \bm \mu, \Delta \bm{v}]^T$ with $\Delta \lambda_{m} = -E_{\pi(m)}, \Delta \mu_m=0, \Delta {v}_k = 0, \forall k$ and $\boldsymbol \eta_2 = [ \Delta \bm \lambda,\Delta \bm \mu,\Delta \bm{v}]^T$ with $\Delta \lambda_{m} = E_{\pi(m)}, \Delta \mu_m=0, \Delta {v}_m = 0,\forall m $, respectively.
After obtaining the optimal location $\mathbf{{q}}^*$, the corresponding optimal transmit power $\mathbf{{p}}^*$ can be directly calculated using \eqref{equ: power4} and objective value $\zeta^*$ can be obtained as $\zeta^* =\mathop {\max }_{m}\{\frac{p^*_{\pi(m)}+P_c}{E_{\pi(m)}}\}$.
The details of the procedures for solving problem \eqref{equ: problem34} are summarized in Algorithm \ref{algorithmsubgradient}. The complexity of Algorithm \ref{algorithmsubgradient} is $O(K^4)$.
\begin{figure*}
	\hrule
	\begin{small}
	\begin{align}\label{equ: lan}
		&\mathcal{L}(\zeta,\mathbf{q},\bm{\lambda},\bm{\mu})=(1-\sum_{m=1}^{K}\lambda_m E_{\pi(m)})\zeta+\sum_{m=1}^{K}(\lambda_m c_m+ \mu_m)||\mathbf{q}-\mathbf{w}_{\pi(m)}||^2+\sum_{m=1}^{K}\left( \lambda_m(c_mH^2+P_c)+\mu_m(H^2-\bar{P}_{\pi(m)})\right)\nonumber\\
		&+2\sum_{m=1}^{K-1}v_m (\mathbf{w}_{\pi(m+1)}-\mathbf{w}_{\pi(m)})^T\mathbf{q} +\sum_{m=1}^{K-1}v_m
		\left(||\mathbf{w}_{\pi(m)}||^2-||\mathbf{w}_{\pi(m+1)}||^2 \right).
	\end{align}
\end{small}
	\hrule
\end{figure*}

\begin{algorithm}[!t]
	\caption{Joint Location and Transmit Power Algorithm for Solving Problem \eqref{equ: problem34}}
	\label{algorithmsubgradient}
	\begin{algorithmic}[1]
		\STATE Initialize $\bm{\lambda},\bm{\mu},\bm v$ and the ellipsoid.		
		\REPEAT
		\STATE Obtain $\mathbf{q}^*$ by \eqref{equ: solution_y31}.
		\STATE Compute the subgradients of objective function and constraint functions in problem \eqref{dualproblem8}.
		\STATE Update $\{\bm{\lambda},\bm{\mu},\bm v\}$ by using the constrained ellipsoid method.
		\UNTIL $\bm{\lambda},\bm{\mu},\bm v$ converge within a prescribed accuracy.
		\STATE Set \{$\bm{\lambda}^*, \bm{\mu}^*, \bm v^*\} \leftarrow \{ \bm{\lambda}, \bm{\mu}, \bm v$\}.
		\STATE Obtain $\mathbf{q}^*$, $\mathbf{{p}}^*$ and $\zeta^*$ by using \eqref{equ: solution_y31} and \eqref{equ: power4}.
	\end{algorithmic}
\end{algorithm}

\begin{algorithm}[!t]
	\caption{Globally Optimal: Proposed Lagrange-duality-based Algorithm for Solving Problem \eqref{equ: problem33}}
	\label{algorithm1}
	\begin{algorithmic}[1]
		\STATE Set the number $l=0$.
		\STATE Find $K!$ decoding orders and represent them as the set $\mathcal{P}=\{\mathcal{P}_1,\cdots,\mathcal{P}_{K!}\}$.
		\FOR{$l=1$ to $K!$} 		
		\STATE Set $\{\pi^l(1),\pi^l(2),\cdots,\pi^l(K)\} \leftarrow \mathcal{P}_l$.
		\STATE Solve problem \eqref{equ: problem34} by applying Algorithm \ref{algorithmsubgradient} for given ${\bm{\pi}^l}$, and denote the optimal solution as $\{ \zeta^l,\mathbf{q}^l,\mathbf{p}^l\}$. 
		\IF{$\zeta^l \leq \zeta^{*}$}
		\STATE	Set $ \{\zeta^*,\mathbf{q}^*,\mathbf{p}^*,\bm{\pi}^*\} \leftarrow \{\zeta^l,\mathbf{q}^l,\mathbf{p}^l,\bm{\pi^l}\}$. 
		\ENDIF
		\ENDFOR
		\STATE 	Obtain the optimal solution as $\{\mathbf{q}^*, \mathbf{{p}}^*,\bm{\pi}^*\}$.
	\end{algorithmic}
\end{algorithm}

\subsection{Decoding Order Optimization and Overall Algorithm }\label{sec: decoding}
We will exhaustively search all the $K!$ possible decoding orders while obtaining the globally optimal UAV location $\mathbf{q}$ and transmit power $\mathbf{p}$ via Algorithm 1 in each inner loop.
The globally optimal value and solution of problem \eqref{equ: problem33} are the maximum value determined by all possible decoding orders and the corresponding $\{ \mathbf q, \mathbf p, \bm{\pi}\}$, respectively.
The details of the procedures are summarized in Algorithm \ref{algorithm1}.
The complexity of Algorithm \ref{algorithm1} is $O(K^4K!)$.

\begin{remark}
	Although Algorithm \ref{algorithm1} entails high complexity, it provides a practically feasible method to obtain the global optimal solution.
	Moreover, Algorithm \ref{algorithm1} is feasible for relatively small-scale IoT networks and can also serve as a benchmark for evaluating other algorithms.
\end{remark}

\section{Low-Complexity Suboptimal Solution}\label{sec: sub-optimal}
The Lagrange-duality-based algorithm in the preceding section provides an optimal solution to problem \eqref{equ: problem33}.
However, the computational complexity of the algorithm is relatively high.
In this section, as an alternative, we propose a low-complexity iterative algorithm for solving problem \eqref{equ: problem33}, which yields a sub-optimal solution for large-scale IoT networks.
Specifically, we first reformulate problem \eqref{equ: problem33} into an equivalent but more tractable form by introducing binary variables to replace the permutation variables and showing their explicit relationship.
 Next, we derive a closed-form solution of the optimal power and then apply the SCA technique and penalty function method to obtain a sub-optimal solution.

\subsection{ Equivalent Reformulation of Problem \eqref{equ: problem33}}
As shown in Section \ref{sec: optimal}, the main complexity of Algorithm 2 lies in the exhaustive search over $K!$ possible solutions.
To reduce the complexity, in this section, we first introduce auxiliary variables $\bm{\alpha}=\{\alpha_{k,j}, k,j \in \mathcal{K}\}$ and the following problem.
\begin{subequations}\label{equ: newproblem}
	\begin{align}
		&\mathop {\min }\limits_{\zeta,\mathbf{q},\mathbf{p},\bm{\alpha}}\quad \zeta\\
		&\text{s.t.}\quad p_{k}+P_c \leq \zeta E_{k},\forall k,\label{equ: newsslack}\\
		&\qquad 0 \leq p_{k} \leq \tilde{P}_k ,\forall k,\label{equ: newpower}\\
		&\qquad \log_2\left(1+\frac{p_{k}h_{k}}{\sum_{j=1,j\neq k}^K \alpha_{k,j}p_jh_{j}+\sigma^2} \right) \geq r^*, \forall k,\label{equ: qosnew}\\
		&\qquad \alpha_{k,j}=\left\{\begin{array}{ll}{0,} & {\text { if } d_{k} > d_{j},} \\ {1,} & {\text { if } d_{k} < d_{j},}\\ {0 \text{ or } 1,} & {\text { if } d_{k} = d_{j},} \end{array}\right. \quad \forall k \neq j,\label{equ: SIC}\\
		&\qquad \alpha_{k,k} = 0, \forall k,\label{equ: SICself}\\
		&\qquad \alpha_{k,j}+\alpha_{j,k} = 1,\forall k\neq j,\label{equ: SICpair}\\
		&\qquad \alpha_{k,j}+\alpha_{j,i}-1 \leq \alpha_{k,i}, \forall k,j,i,\label{equ: SICnew}
	\end{align}
\end{subequations}
where $k,j,i\in \mathcal{K}$ denote the device indexes and are fixed.
\begin{remark}
	In fact, the SIC procedure can be guaranteed by satisfying constraints \eqref{equ: SIC}-\eqref{equ: SICnew}, as elaborated below.
	First, let $\alpha_{k,j} = 1$ denote that the signal of device $k$ is decoded before that of device $j$; otherwise, $\alpha_{k,j} =
	0$. Then, the practical meanings of constraints \eqref{equ: SIC}-\eqref{equ: SICpair} can refer to work \cite{tang2020cognitive}.
	Constraints \eqref{equ: SICnew} ensure that when $\alpha_{k,j}=1$ and $\alpha_{j,i}=1$, it must have $\alpha_{k,i}=1$; otherwise, $\alpha_{k,i}\geq 0$\footnote{Such a relationship can also written as $\alpha_{k,j}\alpha_{j,i} \leq \alpha_{k,i},\forall k,j,i$ \cite{nguyen2018novel}. However, the constraints in \cite{nguyen2018novel} are non-convex. By contrast, our proposed constraints \eqref{equ: SICnew} are linear and easier to handle. }.
	This implies that
	if the signal of device $k$ is decoded before that of device $j$, and the signal of device $j$ is decoded before that of device $i$, then the signal of device $k$ must be decoded before that of device $i$. 
\end{remark}

In the following, we first present the explicit relationship between \bm{\pi} and $\bm \alpha$ and then prove that problems \eqref{equ: problem33} and \eqref{equ: newproblem} share the same optimal value.
To proceed, an important and useful transformation of constraint \eqref{equ: qosnew} is given in the following proposition.
\begin{proposition} \label{lem:lemma3}
	If $\bm \alpha$ satisfies constraints \eqref{equ: qosnew}-\eqref{equ: SICnew}, then constraints \eqref{equ: qosnew} are equivalent to 
	$f(k)= K-\sum_{j=1}^{K}{\alpha}_{k,j}$ and 
	\begin{align}
		\log_2\left(1+\frac{ p_{k}h_{k}}{\sum_{f(j)= f(k)+1}^K {p}_{j} {h}_{j}+\sigma^2} \right) \geq r^*, \forall k.\label{equ: temp11}
	\end{align}
\end{proposition}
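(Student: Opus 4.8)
The plan is to show that constraints \eqref{equ: SICself}, \eqref{equ: SICpair}, and \eqref{equ: SICnew}, together with the binary restriction imposed by \eqref{equ: SIC}, force $\bm{\alpha}$ to encode a \emph{strict total order} on $\mathcal{K}$, so that $f(k)$ is exactly the decoding position of device $k$; once this is established, the equivalence of the two QoS constraints reduces to verifying that their interference terms coincide. Concretely, I would read $\alpha_{k,j}=1$ as ``device $k$ is decoded before device $j$'' and verify the three order axioms. Irreflexivity is immediate from \eqref{equ: SICself}. For $k\neq j$, since $\alpha_{k,j},\alpha_{j,k}\in\{0,1\}$ by \eqref{equ: SIC} and \eqref{equ: SICpair} forces $\alpha_{k,j}+\alpha_{j,k}=1$, exactly one of the two equals $1$, giving totality and antisymmetry. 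Transitivity follows from \eqref{equ: SICnew}: if $\alpha_{k,j}=\alpha_{j,i}=1$, then $\alpha_{k,i}\geq 1$, whence $\alpha_{k,i}=1$. Thus the relation $k\prec j\Leftrightarrow\alpha_{k,j}=1$ is a strict total order.

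Next I would introduce the successor count $D(k)\eqdef\sum_{j=1}^{K}\alpha_{k,j}=|\{j:k\prec j\}|$, i.e., the number of devices decoded after $k$, so that $f(k)=K-D(k)$. Because a strict total order on $K$ elements assigns a distinct number of successors to each element, $D(\cdot)$ takes each value in $\{0,1,\dots,K-1\}$ exactly once; consequently $f$ is a bijection onto $\{1,\dots,K\}$ and coincides with the rank, i.e., the decoding position, of each device. This matches the role of the inverse of the permutation $\bm{\pi}$ used in Section \ref{sec: optimal}.

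The crux of the argument is the equivalence
\[
\alpha_{k,j}=1 \iff f(j)>f(k),
\]
which identifies $\{j:\alpha_{k,j}=1\}$ with $\{j:f(j)>f(k)\}$. For the forward direction I would use transitivity: if $k\prec j$, then every successor of $j$ is also a successor of $k$, and $j$ itself is a successor of $k$ but not of $j$, so $D(k)\geq D(j)+1$, i.e., $f(k)<f(j)$. The converse follows from totality: if $f(j)>f(k)$ then $j\not\prec k$, since otherwise the forward direction would yield $f(j)<f(k)$; hence $k\prec j$. With the two index sets identified, the interference terms satisfy $\sum_{j=1,j\neq k}^{K}\alpha_{k,j}p_jh_j=\sum_{f(j)>f(k)}p_jh_j=\sum_{f(j)=f(k)+1}^{K}p_jh_j$, so \eqref{equ: qosnew} becomes \eqref{equ: temp11} for every $k$, completing the equivalence.

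The main obstacle I anticipate is the crux equivalence, and in particular the strict inequality $D(k)\geq D(j)+1$: it is essential to use transitivity to embed the successors of $j$, together with $j$ itself, into the successors of $k$, rather than merely comparing cardinalities, and to invoke the binary and total nature of $\bm{\alpha}$ so that the successor counts are strictly ordered and $f$ is forced to be a bijection. The remaining steps, namely the order axioms and the rewriting of the interference sum, are routine once this structural fact is in place.
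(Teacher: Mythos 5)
Your proof is correct and follows essentially the same route as the paper: your strict-total-order formulation and the successor-count argument for $D(k)\geq D(j)+1$ are exactly the content of the paper's Lemma \ref{lem:lemma01} (distinctness of $X_k=\sum_j\alpha_{k,j}$, hence bijectivity of $f$), and your crux equivalence $\alpha_{k,j}=1\iff f(k)<f(j)$ is the paper's Lemma \ref{lem:lemma02}, after which both arguments conclude by the same rewriting of the interference sum. The only difference is presentational: you derive the crux directly from transitivity and totality, whereas the paper obtains it by an inductive peeling over the ranks $f(k)=1,2,\dots$.
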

\begin{proof}
	See Appendix \ref{appendix:lemma2}.
\end{proof}

Then, we come to reveal how to obtain a feasible solution of problem \eqref{equ: problem33}/\eqref{equ: newproblem} from the solution of problem \eqref{equ: newproblem}/\eqref{equ: problem33} in the following propositions.
\begin{proposition}\label{lem:lemma4}
	For any feasible solution $\{\bm \alpha, \mathbf p, \mathbf{q}\}$ of problem \eqref{equ: newproblem}, we can construct $\{\bm{\pi}', \mathbf{p}',\mathbf{q}\}$ with $\mathbf{p}'=[p'_1, \cdots, p'_K]^T$, as a feasible solution of problem \eqref{equ: problem33} by setting $\pi'=f^{-1}$ and $p'_{\pi'(m)}=p_{f^{-1}(m)}$. Moreover, the objective value of problem \eqref{equ: problem33} obtained at $\{\bm{\pi}',\mathbf{p}',\mathbf q\}$ is the same as that of problem \eqref{equ: newproblem} obtained at $\{\bm \alpha, \mathbf p, \mathbf{q}\}$\footnote{ Furthermore, owing to the definition of \bm{\pi} and $\pi=f^{-1}$ in Propositions \ref{lem:lemma4}, we have $m=\pi^{-1}(k)=f(k)$, which means	that the signal from device $k$ should be decoded at the $f(k)$-th (i.e., $K-\sum_{j=1}^{K}{\alpha}_{k,j}$). 
	}.
\end{proposition}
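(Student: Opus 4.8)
The plan is to verify that the relabeling $\bm\pi'=f^{-1}$, $p'_{\pi'(m)}=p_{f^{-1}(m)}$ (keeping $\mathbf q$, and hence all distances $d_k$ and gains $h_k$, unchanged) produces a point that is well-defined and satisfies every constraint of problem \eqref{equ: problem33} with the same objective value. Since the construction only permutes the device indices, the verification splits into two parts: first, that $f$ is a genuine permutation of $\{1,\dots,K\}$ (so that $\pi'=f^{-1}$ is an admissible decoding order and the substitution $p'_{\pi'(m)}=p_{f^{-1}(m)}$ is unambiguous); and second, that the permutation-indexed constraints \eqref{equ: channelsort} and \eqref{equ: qos0} reduce to conditions already guaranteed by feasibility of $\{\bm\alpha,\mathbf p,\mathbf q\}$ in problem \eqref{equ: newproblem}.

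I would first show that $f$ is a bijection. Reading $\alpha_{k,j}=1$ as ``device $k$ is decoded before device $j$'', define the relation $k\prec j\iff\alpha_{k,j}=1$. Constraint \eqref{equ: SICpair} forces exactly one of $\alpha_{k,j},\alpha_{j,k}$ to equal $1$ for every pair $k\neq j$, giving completeness and antisymmetry, while constraint \eqref{equ: SICnew} yields transitivity: if $\alpha_{k,j}=\alpha_{j,i}=1$ then the left-hand side equals $1$, forcing $\alpha_{k,i}=1$. Hence $\prec$ is a strict total order on $\mathcal K$. Because $\alpha_{k,k}=0$ by \eqref{equ: SICself}, the quantity $\sum_{j=1}^{K}\alpha_{k,j}=\bigl|\{\,j:k\prec j\,\}\bigr|$ counts exactly the devices that $k$ precedes; in a total order on $K$ elements this number equals $K$ minus the rank of $k$, so $f(k)=K-\sum_{j=1}^{K}\alpha_{k,j}$ returns the rank of $k$. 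Distinct elements have distinct ranks, and the ranks exhaust $\{1,\dots,K\}$, so $f:\mathcal K\to\{1,\dots,K\}$ is a bijection and $\pi'=f^{-1}$ is a valid permutation.

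Next I would check the two remaining constraints. For the distance ordering \eqref{equ: channelsort}, take $m<m'$ and set $k=\pi'(m)$, $k'=\pi'(m')$, so that $f(k)=m<m'=f(k')$ and therefore $\alpha_{k,k'}=1$; by \eqref{equ: SIC}, $\alpha_{k,k'}=1$ excludes the case $d_k>d_{k'}$, whence $d_{\pi'(m)}\leq d_{\pi'(m')}$ and \eqref{equ: channelsort} holds. For the QoS constraint \eqref{equ: qos0}, observe that for $k=\pi'(m)$ the index set $\{\,n:n>m\,\}$ of the SINR denominator coincides with $\{\,j:f(j)>f(k)\,\}$, and $p'_{\pi'(n)}=p_{\pi'(n)}$ by the construction; consequently $\gamma_{\pi'(m)}$ equals the SINR appearing in \eqref{equ: temp11}. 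Proposition \ref{lem:lemma3} guarantees that \eqref{equ: temp11} holds whenever $\{\bm\alpha,\mathbf p\}$ satisfies \eqref{equ: qosnew}, so \eqref{equ: qos0} is met. The per-device constraints \eqref{equ: sslacm3} and \eqref{equ: power33} are identical to \eqref{equ: newsslack} and \eqref{equ: newpower} after reindexing by $\pi'$ (with the same $\zeta$ and $p'_k=p_k$), and the objective $\zeta$ transfers unchanged since the feasible value $\zeta$ of \eqref{equ: newsslack} remains feasible for \eqref{equ: sslacm3} under this relabeling, with $\max_m\frac{p'_{\pi'(m)}+P_c}{E_{\pi'(m)}}=\max_k\frac{p_k+P_c}{E_k}$ invariant under permutation of indices.

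The step I expect to be the crux is proving that $f$ is a bijection, i.e.\ that constraints \eqref{equ: SICpair}--\eqref{equ: SICnew} force $\{\alpha_{k,j}\}$ to encode a strict total order so that the ranks $f(k)$ are all distinct; once this is in place, the channel-sort follows directly from \eqref{equ: SIC}, the QoS is delivered by Proposition \ref{lem:lemma3}, and the remaining constraints and the objective value follow from a routine relabeling argument.
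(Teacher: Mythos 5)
Your proof is correct and follows the same overall skeleton as the paper's: construct $\bm\pi'=f^{-1}$, verify the QoS constraint via Proposition~\ref{lem:lemma3}, verify the distance ordering from the $\alpha$--$d$ relationship in \eqref{equ: SIC}, and note that the power constraints and the objective transfer trivially under relabeling. Where you genuinely diverge is in how you justify the two key intermediate facts. The paper routes both through Lemma~\ref{lem:lemma01} (a case analysis showing $|X_k-X_i|\geq 1$ for $k\neq i$) and Lemma~\ref{lem:lemma02} (an inductive derivation of the closed form of $\alpha^{\text{s}}_{f(k),f(j)}$), and then obtains $d_k\leq d_j$ for $f(k)<f(j)$ by passing through the smooth identity $(2\alpha_{k,j}-1)(d_k-d_j)+|d_k-d_j|=0$. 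You instead observe that \eqref{equ: SICself}--\eqref{equ: SICnew} make the relation $k\prec j\iff\alpha_{k,j}=1$ a strict total order, so that $f(k)$ is simply the rank of $k$; bijectivity of $f$ and the implication $f(k)<f(k')\Rightarrow\alpha_{k,k'}=1\Rightarrow d_k\leq d_{k'}$ then follow in one step from \eqref{equ: SIC}, with no need for the explicit closed form of $\bm\alpha^{\text{s}}$. Your order-theoretic argument is more conceptual and shorter for the purposes of this proposition alone; the paper's heavier Lemma~\ref{lem:lemma02} machinery earns its keep because the closed form of $\bm\alpha^{\text{s}}$ is also what drives the proof of Proposition~\ref{lem:lemma3}, which you (like the paper) still invoke for the QoS step, so the two presentations end up relying on the same underlying structure.
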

\begin{proof}
	See Appendix \ref{appendix:lemma3}.
\end{proof}
\begin{proposition}\label{lem:lemma5}
	For any feasible solution $\{\bm{\pi},\mathbf{p},\mathbf{q}\}$ of problem \eqref{equ: problem33}, 
	we can construct $\{\tilde {\bm {\alpha}},\tilde {\mathbf p},\mathbf{{q}}\}$ with $\tilde {\bm {\alpha}}=\{\tilde {\alpha}_{k,j}, k,j \in \mathcal{K}\}$ and $\tilde {\mathbf p}=[\tilde p_1, \cdots, \tilde p_K]^T$ as a feasible solution of problem \eqref{equ: newproblem} by setting $\tilde p_k=p_{\pi(m)}$ and
	\begin{align}
		\tilde{\alpha}_{\pi(m),\pi(n)} = \left\{\begin{array}{ll}{1,} & {\text { if } m<n,} \\ {0,} & {otherwise. } \\ \end{array}\right.
	\end{align}
	Moreover, 
	the objective value of problem \eqref{equ: newproblem} obtained at $\{\tilde {\bm {\alpha}},\tilde {\mathbf p},\mathbf{{q}}\}$ is identical to that of problem \eqref{equ: problem33} obtained at $\{\bm{\pi},\mathbf{p},\mathbf{q}\}$.
\end{proposition}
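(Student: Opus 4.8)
The plan is to verify directly that the constructed triple $\{\tilde{\bm{\alpha}},\tilde{\mathbf{p}},\mathbf{q}\}$ satisfies every constraint of problem \eqref{equ: newproblem} and that its objective value coincides with that of problem \eqref{equ: problem33} evaluated at $\{\bm{\pi},\mathbf{p},\mathbf{q}\}$. Since this is exactly the converse direction of Proposition \ref{lem:lemma4}, I expect a short chain of substitutions rather than a deep argument; the construction simply reads off the binary ordering matrix from the permutation $\bm{\pi}$ and relabels powers device-by-device, so the real work is bookkeeping on indices.

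First I would dispatch the purely combinatorial constraints \eqref{equ: SIC}--\eqref{equ: SICnew}. From the defining rule $\tilde{\alpha}_{\pi(m),\pi(n)}=1$ iff $m<n$, taking $n=m$ gives $\tilde{\alpha}_{k,k}=0$, which is \eqref{equ: SICself}, and for $m\neq n$ exactly one of $\tilde{\alpha}_{\pi(m),\pi(n)}$ and $\tilde{\alpha}_{\pi(n),\pi(m)}$ equals $1$, which is \eqref{equ: SICpair}. For the transitivity constraint \eqref{equ: SICnew}, writing $k=\pi(m_1)$, $j=\pi(m_2)$, $i=\pi(m_3)$, the hypothesis $\tilde{\alpha}_{k,j}=\tilde{\alpha}_{j,i}=1$ forces $m_1<m_2<m_3$, hence $m_1<m_3$ and $\tilde{\alpha}_{k,i}=1$, so the inequality holds with equality; if either summand is $0$ the left-hand side is non-positive and the bound is trivial. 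The only place where feasibility of $\bm{\pi}$ is actually needed is the consistency condition \eqref{equ: SIC}: here I would invoke the distance-ordering constraint \eqref{equ: channelsort}, $d_{\pi(1)}\leq\cdots\leq d_{\pi(K)}$, to argue that a strict inequality $d_{\pi(m)}<d_{\pi(n)}$ forces $m<n$ (otherwise the sort order is violated), hence $\tilde{\alpha}_{\pi(m),\pi(n)}=1$, and symmetrically $d_{\pi(m)}>d_{\pi(n)}$ forces $\tilde{\alpha}_{\pi(m),\pi(n)}=0$; in the tie case $d_{\pi(m)}=d_{\pi(n)}$ the constraint permits either value, so the assignment dictated by $\bm{\pi}$ is admissible.

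Next I would verify the power-related constraints. Because the construction only relabels powers, $\tilde{p}_{\pi(m)}=p_{\pi(m)}$, while the per-device energy $E_k$ and allowable power $\tilde{P}_k$ are intrinsic to each device, constraints \eqref{equ: newsslack} and \eqref{equ: newpower} are literal copies of \eqref{equ: sslacm3} and \eqref{equ: power33} and hold automatically. For the QoS constraint \eqref{equ: qosnew} I would compute the interference seen by device $k=\pi(m)$: since $\tilde{\alpha}_{\pi(m),\pi(n)}=1$ precisely when $n>m$, the sum $\sum_{j\neq k}\tilde{\alpha}_{k,j}\tilde{p}_{j}h_{j}$ collapses to $\sum_{n=m+1}^{K}p_{\pi(n)}h_{\pi(n)}$, so \eqref{equ: qosnew} reduces exactly to $\log_2(1+\gamma_{\pi(m)})\geq r^*$, i.e.\ constraint \eqref{equ: qos0}, which holds by the assumed feasibility of $\{\bm{\pi},\mathbf{p},\mathbf{q}\}$.

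Finally, since both problems minimize $\zeta$ and the constructed powers agree with the original ones device-by-device, the smallest feasible $\zeta$ for the constructed solution is $\max_{k}\frac{\tilde{p}_{k}+P_c}{E_{k}}=\max_{m}\frac{p_{\pi(m)}+P_c}{E_{\pi(m)}}$, so the two objective values are identical, completing the argument. I expect the handling of distance ties in \eqref{equ: SIC} to be the only subtle point; the remainder is direct substitution.
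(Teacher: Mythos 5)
Your proposal is correct and follows essentially the same route as the paper's Appendix D: a direct constraint-by-constraint verification that the relabelled powers and the indicator matrix induced by $\bm{\pi}$ satisfy \eqref{equ: newpower}--\eqref{equ: SICnew}, with the QoS interference sum collapsing to $\sum_{n=m+1}^{K}p_{\pi(n)}h_{\pi(n)}$ and the objective unchanged because the powers are merely relabelled. Your treatments of \eqref{equ: SIC} (arguing directly from the monotone ordering $d_{\pi(1)}\leq\cdots\leq d_{\pi(K)}$ rather than via the paper's sign-function identity) and of \eqref{equ: SICnew} (a one-line transitivity argument in place of the paper's four-case analysis) are streamlined but substantively identical.
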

\begin{proof}
	See Appendix \ref{appendix:lemma5}.
\end{proof}

We are now ready to clarify the equivalence between problems \eqref{equ: problem33} and \eqref{equ: newproblem}.

\begin{proposition}\label{lem:lemma6}
	Problem \eqref{equ: newproblem} can achieve the same optimal value as problem \eqref{equ: problem33}.
\end{proposition}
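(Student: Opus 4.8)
The plan is to establish the equality of the two optimal values through a standard two-sided inequality argument that directly chains together the feasibility-preserving constructions of Propositions~\ref{lem:lemma4} and \ref{lem:lemma5}. Let $\zeta_1^*$ and $\zeta_2^*$ denote the optimal values of problems \eqref{equ: problem33} and \eqref{equ: newproblem}, respectively. Since both problems minimize the \emph{same} objective $\zeta$ over their respective feasible sets, it suffices to prove the two inequalities $\zeta_1^* \le \zeta_2^*$ and $\zeta_2^* \le \zeta_1^*$, which together force $\zeta_1^* = \zeta_2^*$.

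First I would prove $\zeta_1^* \le \zeta_2^*$. Let $\{\bm{\alpha}^*,\mathbf{p}^*,\mathbf{q}^*\}$ be an optimal solution of problem \eqref{equ: newproblem} attaining the value $\zeta_2^*$. By Proposition~\ref{lem:lemma4}, this point induces a triple $\{\bm{\pi}',\mathbf{p}',\mathbf{q}^*\}$ that is feasible for problem \eqref{equ: problem33} and yields the identical objective value $\zeta_2^*$. Because $\zeta_1^*$ is the minimum of problem \eqref{equ: problem33} over its feasible set, the mere existence of this feasible point with value $\zeta_2^*$ forces $\zeta_1^* \le \zeta_2^*$.

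Conversely, to obtain $\zeta_2^* \le \zeta_1^*$, let $\{\bm{\pi}^*,\mathbf{p}^*,\mathbf{q}^*\}$ be an optimal solution of problem \eqref{equ: problem33} attaining $\zeta_1^*$. By Proposition~\ref{lem:lemma5}, it induces a triple $\{\tilde{\bm{\alpha}},\tilde{\mathbf{p}},\mathbf{q}^*\}$ that is feasible for problem \eqref{equ: newproblem} with the same objective value $\zeta_1^*$, whence $\zeta_2^* \le \zeta_1^*$. Combining the two inequalities completes the argument.

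The only point at which I would exercise care is the implicit assumption that the optima of both problems are attained; this is mild, since the objective $\zeta$ is bounded below by the positive constant $P_c/\max_k E_k$, the feasible sets are nonempty, and the effective domain of $\mathbf{q}$ is bounded by the allowable-power constraints, so the argument can in any case be recast verbatim in terms of infima without loss of generality. All the genuine difficulty has already been discharged in Propositions~\ref{lem:lemma4} and \ref{lem:lemma5}: verifying that the forward map $\bm{\pi}=f^{-1}$ and the reverse map $\tilde{\alpha}_{\pi(m),\pi(n)}=\mathbf{1}\{m<n\}$ each transport feasibility across the two formulations while \emph{exactly} preserving the value of $\zeta$. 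The present proof simply composes those two value-preserving constructions, so I expect no remaining obstacle beyond getting the direction of each inequality correct for the minimization.
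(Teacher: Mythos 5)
Your proof is correct and follows essentially the same route as the paper: a two-sided inequality obtained by applying Proposition~\ref{lem:lemma4} to an optimizer of problem \eqref{equ: newproblem} and Proposition~\ref{lem:lemma5} to an optimizer of problem \eqref{equ: problem33}. The paper states this in one line; your additional remark about attainment of the optima is a harmless refinement the paper omits.
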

\begin{proof}
	Define $L_1$ and $L_2$ as the optimal value of problem \eqref{equ: problem33} and problem \eqref{equ: newproblem}, respectively.
	Based on Proposition \ref{lem:lemma4} and \ref{lem:lemma5}, we have $L_1 \leq L_2$ and $L_1 \geq L_2$, respectively. Thus, $L_1 = L_2$.			\hfill $\blacksquare$
\end{proof}


It should be noted that in the literature, both binary variables $\bm \alpha$ as well as permutation variables \bm{\pi} are adopted for decoding order formulations, such as \cite{8848428,8685130,lu2020uav,8918266,tang2020cognitive,xu2020joint,zhang2019optimal,nguyen2018novel}.
However, these two different problem formulations are used alternatively without any explicit relationship and thus have always been treated as two independent programs. 
Furthermore, only a sub-optimal solution of $\bm{\alpha}$-related programs can be obtained by relaxing the binary constraints and applying the SCA technique in general. It yet remains unknown whether these two problem formulations can achieve the same optimal performance.
In this paper, we prove that they can be used interchangeably via Propositions \ref{lem:lemma4}-\ref{lem:lemma6} and thus unify these two problems for the first time in the literature.

\subsection{Closed-form Solution of Transmit Power}
Based on proposition \ref{lem:lemma6}, we only need to focus on solving problem \eqref{equ: newproblem} with the non-convex constraints \eqref{equ: qosnew}-\eqref{equ: SIC} and binary variables $\bm{\alpha}$.
In general, there is no standard method to solve such a problem optimally.
In this subsection, we will derive a explicit analytical solution of $p_k$ as a function of $\bm \alpha$ and $\mathbf{{q}}$ based on \eqref{equ: temp11}, by which problem \eqref{equ: newproblem} can be greatly simplified.

By Proposition \ref{lem:lemma3}, problem \eqref{equ: newproblem} can be equivalently written as
\begin{subequations}\label{equ: problem38}
	\begin{align}
		&\mathop {\min }\limits_{\zeta,\mathbf{q},\mathbf{p},\bm{\alpha},f}\quad \zeta\\
		&\text{s.t.}
		\quad p_k\geq \frac{\left( 2^{r^*}-1 \right)\left(\sum_{ f(j)= f(k)+1}^K {p}_{j} {h}_{j}+\sigma^2\right)}{{h}_{k} },\forall k,\label{equ: qosnew2} \\
		&\qquad f(k)= K-\sum_{j=1}^{K}{\alpha}_{k,j},\forall k,\label{equ: temp12}\\
		&\qquad \eqref{equ: newsslack}, \eqref{equ: newpower}, \eqref{equ: SIC}-\eqref{equ: SICnew}. 
	\end{align}
\end{subequations}
As we can see, the objective function $\zeta(\mathbf{p}) =\mathop {\max }_{k}\{\frac{p_{k}+P_c}{E_{k}}\}$ of problem \eqref{equ: problem38} is
increasing with $p_k$.
Thus, the optimal $\mathbf p^*$ should always be the lower bound, which can be obtained when constraints \eqref{equ: qosnew2} satisfy strict equality.
To obtain $\mathbf p^*$, define $p_{f(k)}^{s}=p_k$ and $h_{f(k)}^s(\mathbf{w}_{f(k)}^s)=h_k(\mathbf{w}_{k})$.
Then constraint \eqref{equ: qosnew2} can be rewritten as 
\begin{align}
	p_{f(k)}^{s} \geq \frac{( 2^{r^*}-1 )}{h^s_{f(k)} }\left(\sum_{ f(j)= f(k)+1}^K p_{f(j)}^{s} h_{f(j)}^s+\sigma^2\right).
\end{align}
By following the similar lines as previous proofs of Lemma 1, we can obtain the optimal $p_{f(k)}^{s*}=c_{f(k)}(H^2+||\mathbf{q}-\mathbf{w}_{f(k)}^s||^2)$, where $c_{f(k)}\triangleq\frac{(2^{r^*}-1)\sigma^2}{\rho_0}2^{(K-f(k))r^*}$.
This yields
\begin{align}
	p_{k}^*=c_{f(k)}(H^2+||\mathbf{q}-\mathbf{w}_{k}||^2).\label{equ: optimalbeta}
\end{align}	
By adopting a similar approach in our previous work \cite{tang2020cognitive}, the non-smooth constraints \eqref{equ: SIC} can be equivalently transformed into a smooth form as follows.
\begin{subequations}\label{equ: tmp3}
	\begin{align}
		& \sum\limits_{k,j}^K \left(\alpha_{k,j}-\alpha_{k,j}^2 \right) \leq 0,\label{equ: timescheduletemp}\\
		&\sum_{k,j}^K g_{k,j}(\alpha_{k,j},\mathbf{q}) \leq 0,\label{equ: functiontemp}\\
		&0 \leq \alpha_{k,j} \leq 1, \forall k, j,\label{equ: binarytmp}
	\end{align}
\end{subequations}
where $g_{k,j}(\alpha_{k,j},\mathbf{q})\triangleq \theta_{k,j}(2\alpha_{k,j}-1)+\left|\theta_{k,j}\right|$  and $\theta_{k,j} \triangleq 2(\mathbf{w}_{j}-\mathbf{w}_{k})^T\mathbf{q}+||\mathbf{w}_{k}||^2-||\mathbf{w}_{j}||^2 $.
Note that $\theta_{k,j}$ is linear with respect to $\mathbf{q}$ and $|\theta_{k,j}|$ is convex with respect to $\mathbf{q}$ \cite{tang2020cognitive}.
With \eqref{equ: optimalbeta} and \eqref{equ: tmp3}, problem \eqref{equ: problem38} reduces to
\begin{subequations}\label{equ: problem3110}
	\begin{align}
		\mathop {\min }\limits_{{\zeta,\mathbf{q},\bm{\alpha}},f}&\quad \zeta\\
		\text{s.t.} &\quad c_{f(k)}\left( H^2+||\mathbf{q}-\mathbf{w}_k||^2\right)+P_c \leq \zeta E_{k},\forall k,\label{equ: sslack8}\\
		&\quad c_{f(k)}\left( H^2+||\mathbf{q}-\mathbf{w}_k||^2\right) \leq \tilde{P}_k,\forall k,\label{equ: power8} \\
		&\quad  \eqref{equ: SICself}-\eqref{equ: SICnew}, \eqref{equ: temp12},\eqref{equ: tmp3}.
	\end{align}
\end{subequations}
\subsection{SCA-Based Optimization}

The main challenge of problem \eqref{equ: problem3110} lies in that $f$ is involved as the index in constraints \eqref{equ: sslack8} and \eqref{equ: power8}.
A close observation of problem \eqref{equ: problem3110} shows that we can solve the following sub-problems alternately: sub-problem 1 optimizes
$f$ with given $\bm \alpha$, whose optimal solution can be obtained from constraints \eqref{equ: temp12}, i.e.,
\begin{align}
	f^*(k)= K-\sum_{j=1}^{K}{\alpha}_{k,j},\label{equ: temp13}
\end{align}
and subproblem 2 optimizes $\{\mathbf{q},\bm{\alpha}\}$ with given $f$, i.e.,
\begin{subequations}\label{equ: problem311}
	\begin{align}
		\mathop {\min }\limits_{{\zeta,\mathbf{q},\bm{\alpha}}}&\quad \zeta\\
		\text{s.t.}
		&\quad  \eqref{equ: SICself}-\eqref{equ: SICnew}, \eqref{equ: tmp3},\eqref{equ: sslack8}, \eqref{equ: power8}.
	\end{align}
\end{subequations}
The remaining task is to solve problem \eqref{equ: problem311} with non-convex constraints \eqref{equ: timescheduletemp} and \eqref{equ: functiontemp}.
With loss of generality, we can leverage the SCA technique to approximate the non-convex term to a convex form in each iteration and then iteratively solve a series of approximated convex problems.
However, the SCA technique cannot be straightforwardly applied to constraints \eqref{equ: timescheduletemp} and \eqref{equ: functiontemp} since there may be some iterations where the approximated problem is infeasible due to \eqref{equ: binarytmp}.
To overcome this obstacle, we introduce slack variables $\phi$ and $\varphi$ and take into account the relaxed version of problem \eqref{equ: problem311} as
\begin{subequations}\label{equ: problem312}
	\begin{align}
		\mathop {\min }\limits_{{\zeta,\mathbf{q},\bm{\alpha}}} &\quad \zeta+ \rho_1 \phi +\rho_2 \varphi \\
		\text{s.t.}	
		& \quad \eqref{equ: SICself}-\eqref{equ: SICnew}, \eqref{equ: binarytmp}, \eqref{equ: sslack8}-\eqref{equ: power8},
	\end{align}
\end{subequations}
where $\phi \triangleq\sum\limits_{k,j}^{K} (\alpha_{k,j} -\alpha_{k,j}^2)$ and $\varphi \triangleq\sum\limits_{k,j}^{K}g_{k,j}(\alpha_{k,j},\mathbf{q})$, and
$\rho_1 > 0$ and $\rho_2>0$ are penalty parameters.
It is worth noting that it must have $\phi \geq 0$ and $\varphi \geq 0$ due to $\alpha_{k,j} -\alpha_{k,j}^2 \geq 0 $ and $g_{k,j}(\alpha_{k,j},\mathbf{q}) \geq 0$ within the feasible region of problem \eqref{equ: problem312}.
Inspired by \cite{8918266}, it can be proved that problem \eqref{equ: problem312} is equivalent to \eqref{equ: problem311}, when $\rho_1 \geq \rho_1^*$ and $\rho_2\geq \rho_2^*$, with $\rho_1^*$ and $\rho_2^*$ denoting the optimal Lagrange multiplier of constraints \eqref{equ: timescheduletemp} and \eqref{equ: functiontemp}, respectively.
In the following, we transform the non-convex constraints \eqref{equ: timescheduletemp} and \eqref{equ: functiontemp} into convex constraints by deriving the global lower bounds at a given point.
Specifically, based on the fact that the first-order Taylor expansion of concave function is its global over-estimator, we have the following upper bound for $\phi$ at any local point $\{\bar{\alpha}_{k,j}\}$, i.e.,
\begin{align}
	\phi \leq \sum_{k,j}^K \left( \alpha_{k,j}^2+\bar{\alpha}_{k,j}^{2}-2\bar{\alpha}_{k,j}\alpha_{k,j} \right) \triangleq \bar{\phi}.\label{equ: SICordertemp}
\end{align}
In addition,  we can obtain the upper bound of $\alpha_{kj}\theta_{k,j}$ at any local point $\{\bar{\mathbf{q}},\bar \alpha_{k,j}\}$ as
\begin{align}
	&2\alpha_{kj}\theta_{k,j}=\frac{1}{2}\left[ (\theta_{k,j}+\alpha_{kj})^2-(\theta_{k,j}-\alpha_{kj})^2\right]\nonumber\\
	&\leq \frac{1}{2}\left[ (\theta_{k,j}+\alpha_{kj})^2+({\bar \theta_{k,j}}-{\bar \alpha_{k,j}})^2\right] -({\bar \theta_{k,j}}-{\bar \alpha_{k,j}})(\theta_{k,j}-\alpha_{kj}) \triangleq D_{kj},\label{equ: betaslack}
\end{align}
where ${\bar \theta_{k,j}} \triangleq2(\mathbf{w}_{j}-\mathbf{w}_{k})^T\mathbf{\bar q}+||\mathbf{w}_{k}||^2-||\mathbf{w}_{j}||^2 $.

With the upper bounds in \eqref{equ: SICordertemp} and \eqref{equ: betaslack}, as well as any given local point $\{\bar{\mathbf{q}},\bar \alpha_{k,j}\}$, problem \eqref{equ: problem312} can be approximated as the following problem.
\begin{subequations}\label{equ: problem313}
	\begin{align}
		\mathop {\min }\limits_{{\zeta, \mathbf{q},\bm{\alpha}}} &\quad \zeta+ \rho_1 \bar \phi +\rho_2 \bar \varphi \\
		\text{s.t.}& \quad \eqref{equ: SICself}-\eqref{equ: SICnew}, \eqref{equ: binarytmp}, \eqref{equ: sslack8}-\eqref{equ: power8},
	\end{align}
\end{subequations}
where $\bar \varphi = \sum\limits_{k,j}^{K}\left(D_{k,j}-\theta_{k,j}+\left|\theta_{k,j}\right| \right) $.
Note that problem \eqref{equ: problem313} is a convex optimization problem, which can be effectively solved by standard convex optimization tools such as CVX.
It is worth noting that
the feasible set of problem \eqref{equ: problem313} is a subset of that of problem \eqref{equ: problem312}.
Therefore, the objective value of problem \eqref{equ: problem313} gives a lower bound to that of problem \eqref{equ: problem312}.

\subsection{Overall Algorithm}\label{sec: algorithm2}
The proposed iterative algorithm for problem \eqref{equ: problem33}/\eqref{equ: newproblem} is concluded in Algorithm \ref{algorithm3}.
Specifically, problem \eqref{equ: newproblem} is solved by optimizing $f$ and subproblem \eqref{equ: problem313} iteratively in an alternate manner.
Since CVX invokes an interior-point method to solve the optimization problem, the computational complexity of the proposed algorithm is $O(N_{ite}K^{7})$, where $N_{ite}$ is the number of iterations for convergence in Algorithm \ref{algorithm3}.

\begin{algorithm}[!t]
	\caption{Sub-optimal: Proposed Iterative Algorithm for Solving Problem \eqref{equ: problem33}.}
	\label{algorithm3}
	\begin{algorithmic}[1]
		\STATE Initialize the UAV location $\mathbf{\bar q}$ and $\bm{\bar \alpha}$. 
		\REPEAT
		\STATE	Initialize $l=1$, $\rho_1^0$ and $\rho_2^0$. Set $b_1>1$, $b_2>1$, $\rho_{1}^{\max}$ and $\rho_{2}^{\max}$.
		\STATE Update $ f$ based on \eqref{equ: temp13} for given $\bm{\bar \alpha}$.
		\REPEAT		
		
		\STATE Solve problem \eqref{equ: problem313} for given $\{\mathbf{\bar q}, \bm{\bar \alpha}\} $, and denote the optimal solution as $\{\mathbf{q}^*,\bm{\alpha}^*\}$.
		\STATE Set $\{{{\mathbf{\bar q},\bm{\bar \alpha}}}\} \leftarrow \{{{\mathbf{q}^*,\bm{\alpha}^*}}\} $.
		\STATE Update $l \leftarrow l+1$, $ \rho_1^{l+1}\leftarrow \min\{b_1 \rho_1^{l},\rho_{1}^{\max}\}$ and $ \rho_2^{l+1}\leftarrow \min\{b_2 \rho_2^{l},\rho_{2}^{\max}\}$.
		\UNTIL Convergence.
		\UNTIL The fractional decrease of the objective value is below a threshold.
	\end{algorithmic}
\end{algorithm}

\begin{remark}
	In Algorithm 3, the initial UAV location $\mathbf{q}^0$ and $\bm{\alpha}^0$ should be carefully set. 
	To this end,
	we propose a feasibility checking method for initial points. 	
	Specifically, $\{\mathbf{q}^0,\bm{\alpha}^0\}$ is feasible to problem \eqref{equ: newproblem} only when the low-bound of $p_k$ is no larger than its upper bound, i.e.,
	\begin{align}
		\frac{\rho_0(2^{r^*}-1)2^{r^*(K-f^0(k))}}{\sigma^2}\left( H^2+||\mathbf{q}^0-\mathbf{w}_{f^0(k)}||^2\right) \leq \tilde{P}_{f^0(k)}, 
	\end{align}
\end{remark}
where $f^0$ is obtained based on $\bm{\alpha}^0$.

\section{Numerical Results}\label{sec: result}
In this section, numerical results are provided to evaluate the performance of the proposed algorithms.
Unless being mentioned elsewhere, the reference SNR is set as $\gamma_0=60$ dB. For IoT devices, the maximum transmit power is $P_{\max}=1$ W, circuit power consumption  is $P_c = 0.9$ W, and  battery energy is $E=4 \times 10^3$ J.
Moreover, the block error probability is set as $\varepsilon_{p}^{2} = 10^{-2}$, the interference threshold is $I_{th} = 28$ dBm, and 
the	 probability interference is $\varrho = 0.1\%$.
In this network, we consider that $K=6$ IoT devices are randomly
distributed in a horizontal plane, marked by `$\blacksquare$'s. 

In the following, the proposed optimal algorithm \ref{algorithm1} (denoted as `Op-NOMA-J') and sub-optimal algorithm \ref{algorithm3} (denote as `Sub-NOMA-J') are compared with two benchmark algorithms\footnote{The global optimal solution of the two benchmark algorithms can be obtained by using a similar method in Section \ref{sec: lifetimeproblem}.}:
\begin{itemize}
	\item NOMA-P: For a UAV-enabled IoT network with cognitive NOMA transmission, the UAV location is fixed at the geometric center of IoT devices while the transmit
	power is optimized.
	\item FDMA: For a UAV-enabled IoT network with cognitive FDMA transmission \cite{8886053}, the UAV location and transmit power are jointly optimized.
\end{itemize}

\begin{figure}[!t]
	\centering  
	\subfigcapskip=+5pt 
	\subfigure[ Inner loop]{   
		\begin{minipage}{7cm}
			\centering    
			\includegraphics[scale=0.5]{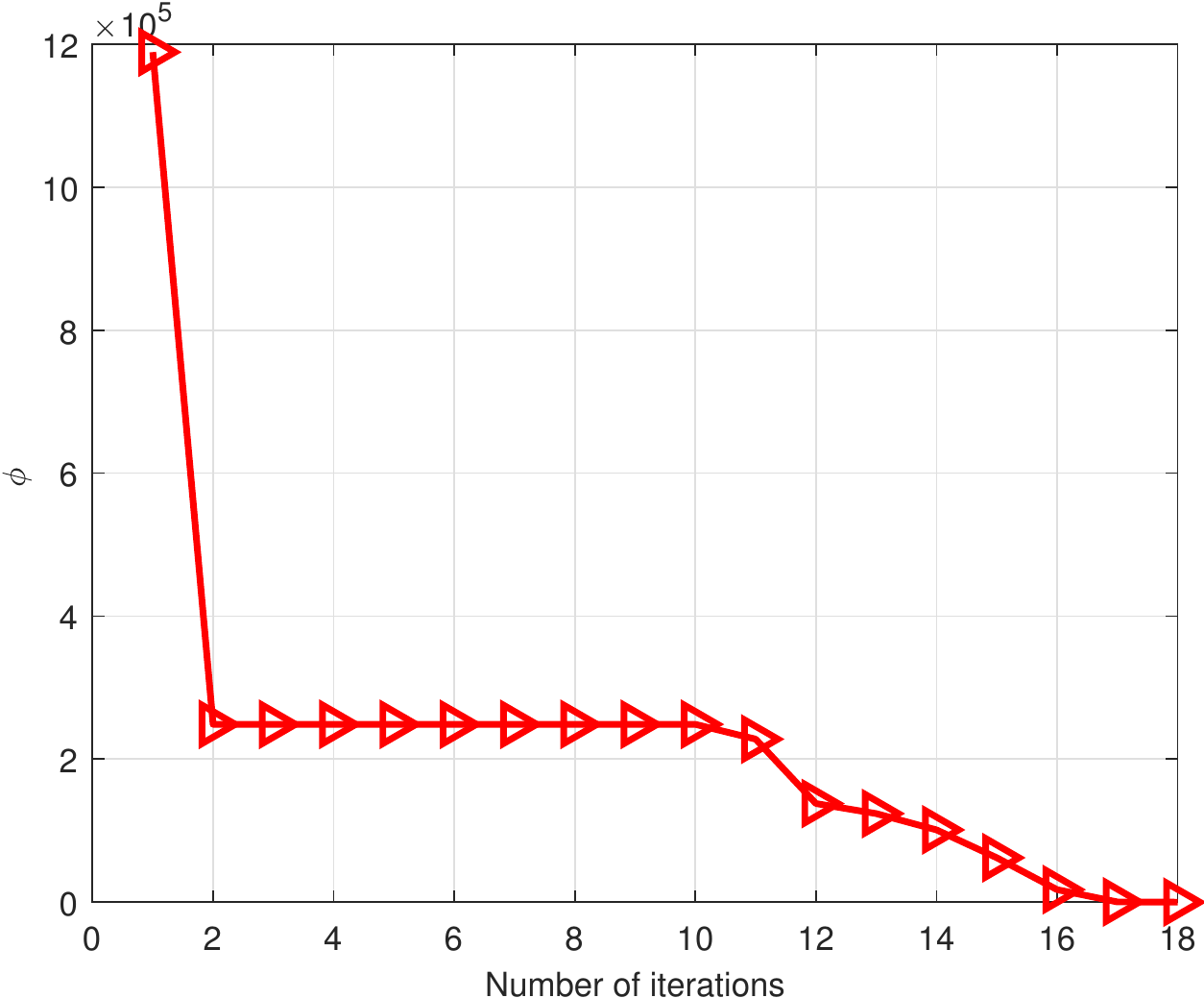}  
					\label{fig: convergence_penalty}
		\end{minipage}
	}
	\subfigure[ Outer loop]{ 
		\begin{minipage}{7cm}
			\centering    
			\includegraphics[scale=0.5]{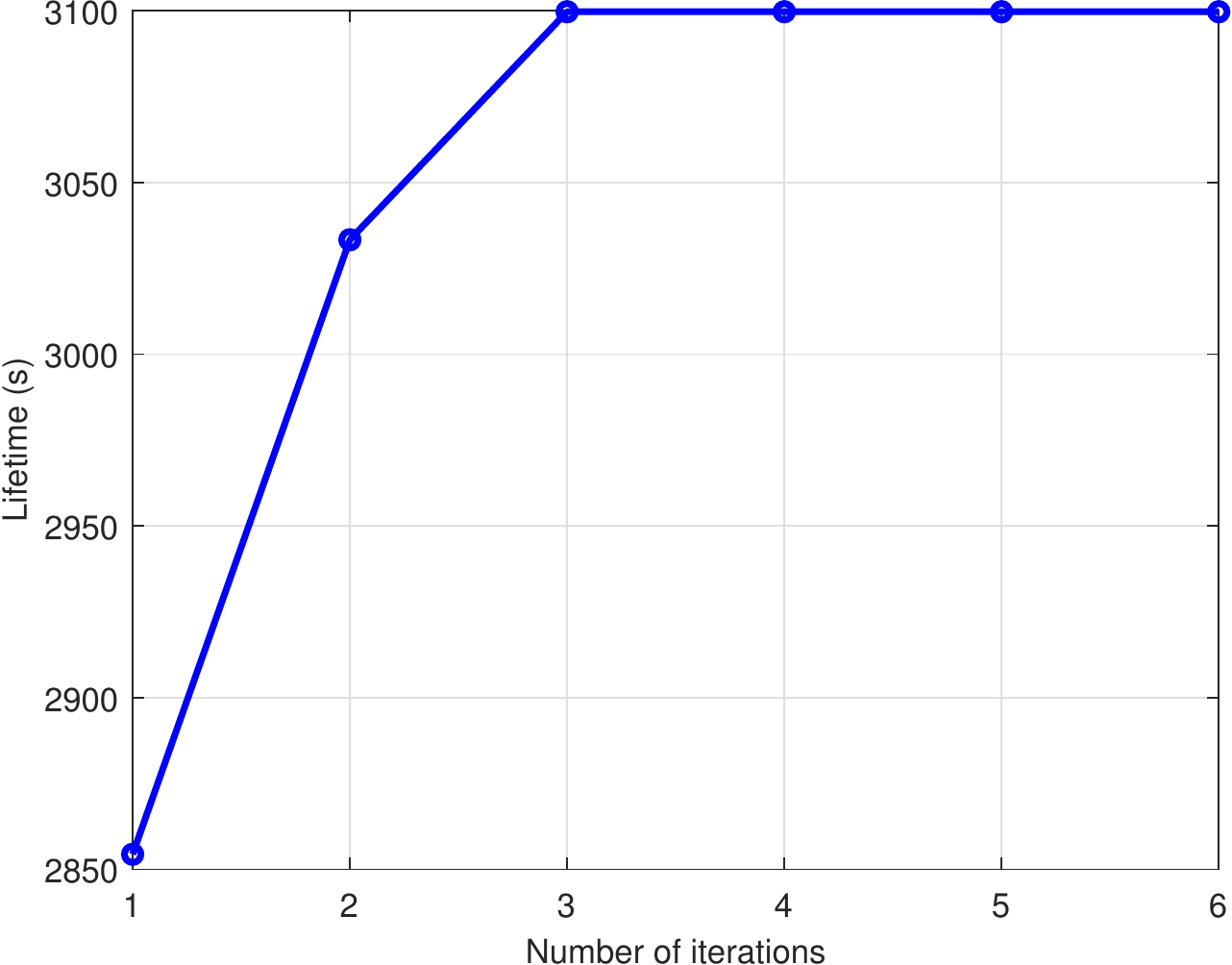}
					\label{fig: convergence}
		\end{minipage}
	}
	\caption{Convergence performance of Algorithm \ref{algorithm3}.}    
	\label{fig: convergenceall}    
\end{figure}


In Fig. \ref{fig: convergenceall}, we demonstrate the convergence performance of our proposed iterative algorithm \ref{algorithm3} in terms of the outer-loop iteration and inner-loop iteration with $r^* = 0.6 $ bps/Hz.
Fig. \ref{fig: convergence_penalty} shows the penalty value $\varphi$ versus the number of iterations in the first inner loop.
It can be observed that $\varphi \rightarrow 0$ in the end.
Fig. \ref{fig: convergence} shows the max-min lifetime versus the number of iterations in the outer loop.
From the figure, the lifetime increases quickly with the number of iterations and the algorithm converges within 6 iterations.


\begin{figure}[!t]
	\centering  
	\subfigcapskip=+5pt 
	\subfigure[ UAV location]{   
		\begin{minipage}{7cm}
			\centering    
			\includegraphics[scale=0.5]{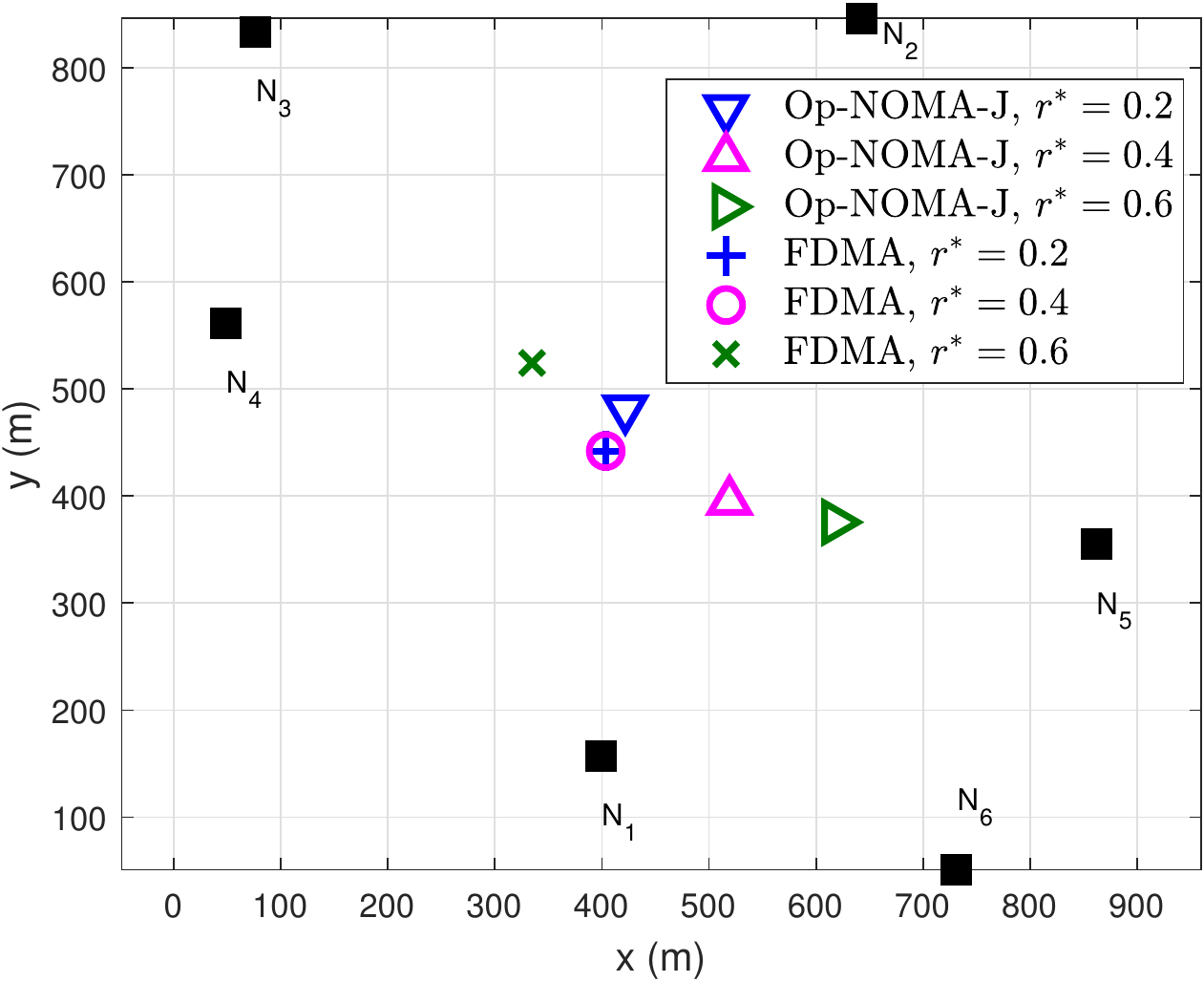}  
			\label{fig: Compare_location}
		\end{minipage}
	}
	\subfigure[ Power for each device]{ 
		\begin{minipage}{7cm}
			\centering    
			\includegraphics[scale=0.5]{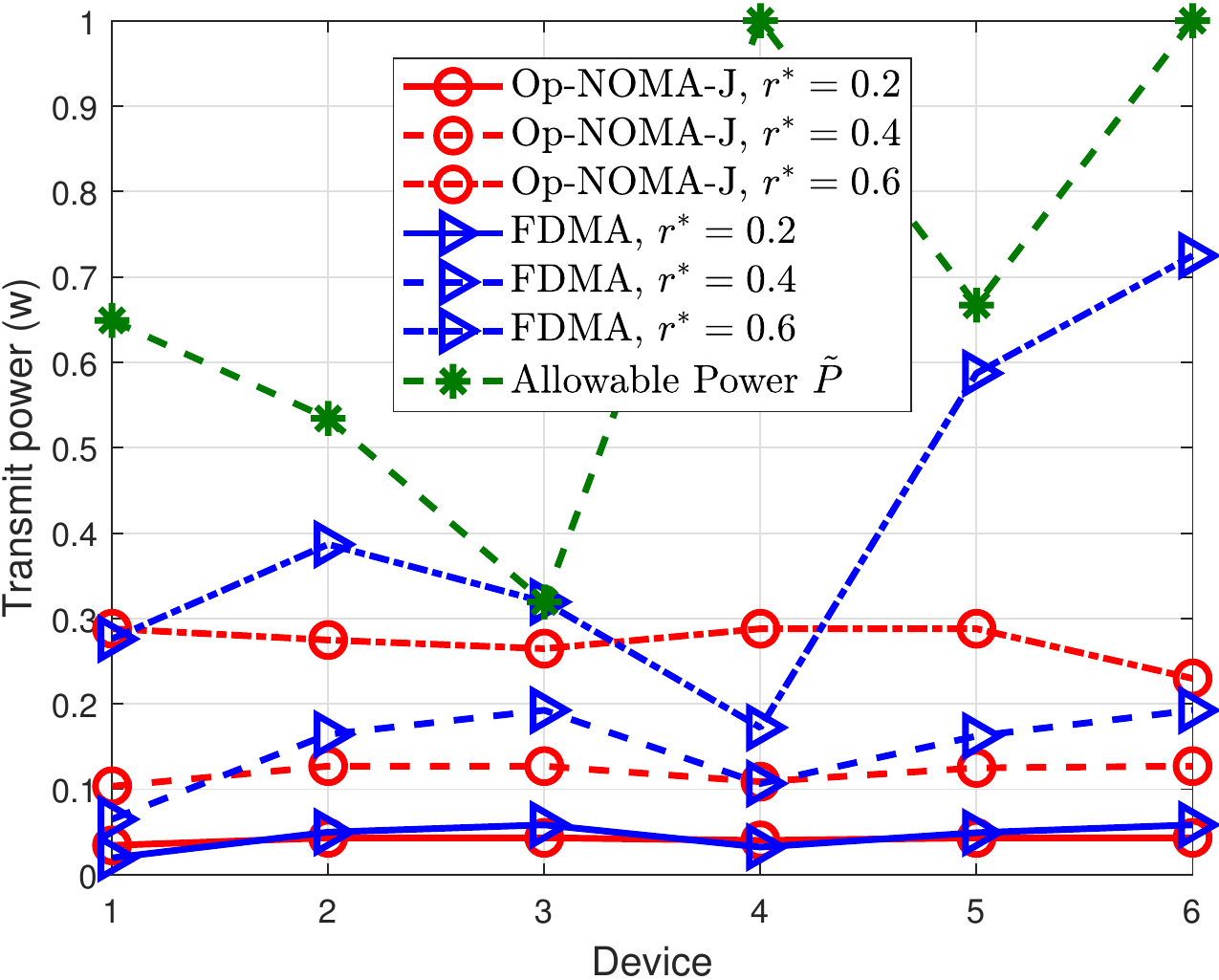}
			\label{fig: power}
		\end{minipage}
	}
	\caption{Optimized UAV location and power under different $r^*$ for Op-NOMA-J and FDMA.}    
	\label{fig: 1}    
\end{figure}


Fig. \ref{fig: 1} shows the UAV location and transmit power for Op-NOMA-J and FDMA under different QoS requirements $r^*$.
For FDMA, we observe from Fig. \ref{fig: Compare_location} that as $r^*$ increases, the UAV flies toward device $3$.  This can be explained as follows. With the increase in $r^*$, each device rises its transmit power to satisfy the QoS requirement.
Since the allowable transmit power of device 3, i.e.,  $\tilde{P}_3$, is the bottleneck of the network (as shown in Fig. \ref{fig: power}), then once the transmit power of device 3 $p^{F}_3$  approaches $\tilde P_3$, i.e., $p^{F}_3 = \tilde{P}_3$,  the UAV can only move closer to device $3$ to further improve the channel quality for the larger QoS requirement. On the contrary, for Op-NOMA-J,  the UAV moves from the vicinity of the geometric center to away from device $3$ and $4$ as $r^*$ increases. 
Note that the achievable rate of device $3$ is given by $R_3^N= \log_2(1+\frac{p^{N}_3h^{N}_3}{I_0+\sigma^2})$, where $I_0$ is the interference that device $3$ has experienced, $p^{N}_3$ is the transmit power and  $h^N_3$ is the channel gain. Owing to the small or even zero $I_0$, we can avoid reaching the upper bound of $p^{N}_3$ too early when $h^N_3$ reduces.
Besides, this result reveals that with the growth of $r^*$, the distance discrepancy between the strongest and weakest devices to the UAV increases, or the discrepancy between the strongest and weakest channel gain increases.
Moreover, from Fig. \ref{fig: power}, when $r^*$ is large, the transmit powers of all devices are quite distinct in Op-NOMA-J and FDMA, which is due to their different resource allocation mechanisms.
	
\begin{figure}[!t]
	\centering
	\includegraphics[width=3in]{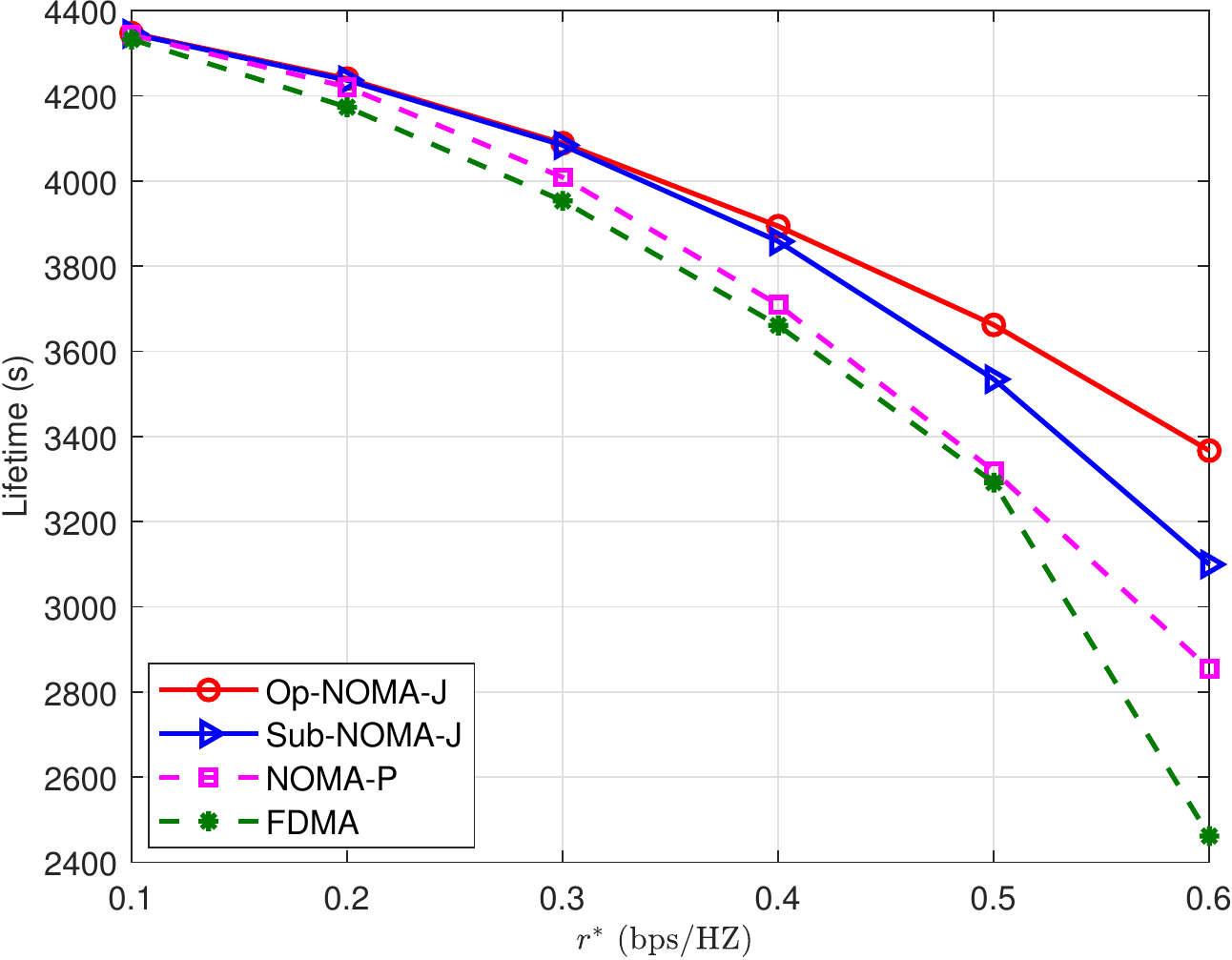}
	\caption{Lifetime versus the QoS requirement $r^*$ for different algorithms.}\label{fig: Compare_value_rate}
\end{figure}

\begin{figure}[!t]
	\centering
	\includegraphics[width=3in]{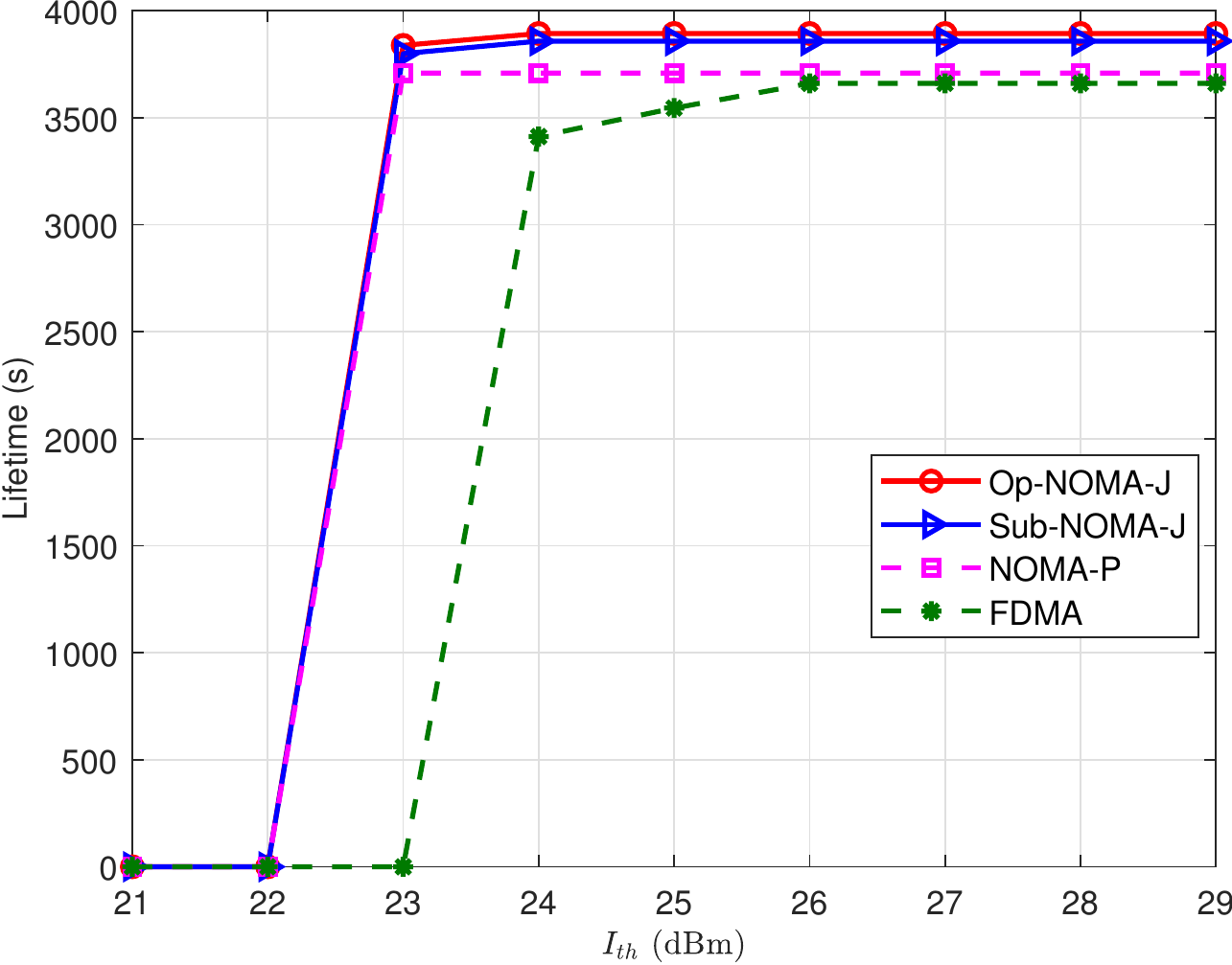}
	\caption{Lifetime versus the interference threshold $I_{th}$ for different algorithms.}\label{fig: Compare_value_threshold}
\end{figure}

Fig. \ref{fig: Compare_value_rate} shows the max-min lifetime versus the QoS requirement $r^*$ for the different algorithms.
It can be observed that the lifetime achieved by all the four algorithms rapidly decreases with the growth of $r^*$.
This is expected since as $r^*$ increases, all devices need to consume more transmit power to meet the QoS requirement, which
degrades the lifetime performance.
It can also be seen that the lifetime of the three NOMA-based algorithms significantly outperforms that of FDMA, especially when $r^*$ is large.
This result indicates that NOMA is more effective than FDMA in improving the lifetime.
Besides, the performance gap between NOMA-J (including Sub-NOMA-J and Op-NOMA-J ) and NOMA-P becomes larger as $r^*$ increases, which demonstrates the importance of the joint optimization of the UAV location, transmit power, and decoding order in enhancing lifetime performance.
\begin{figure}[!t]
	\centering
	\includegraphics[width=3in]{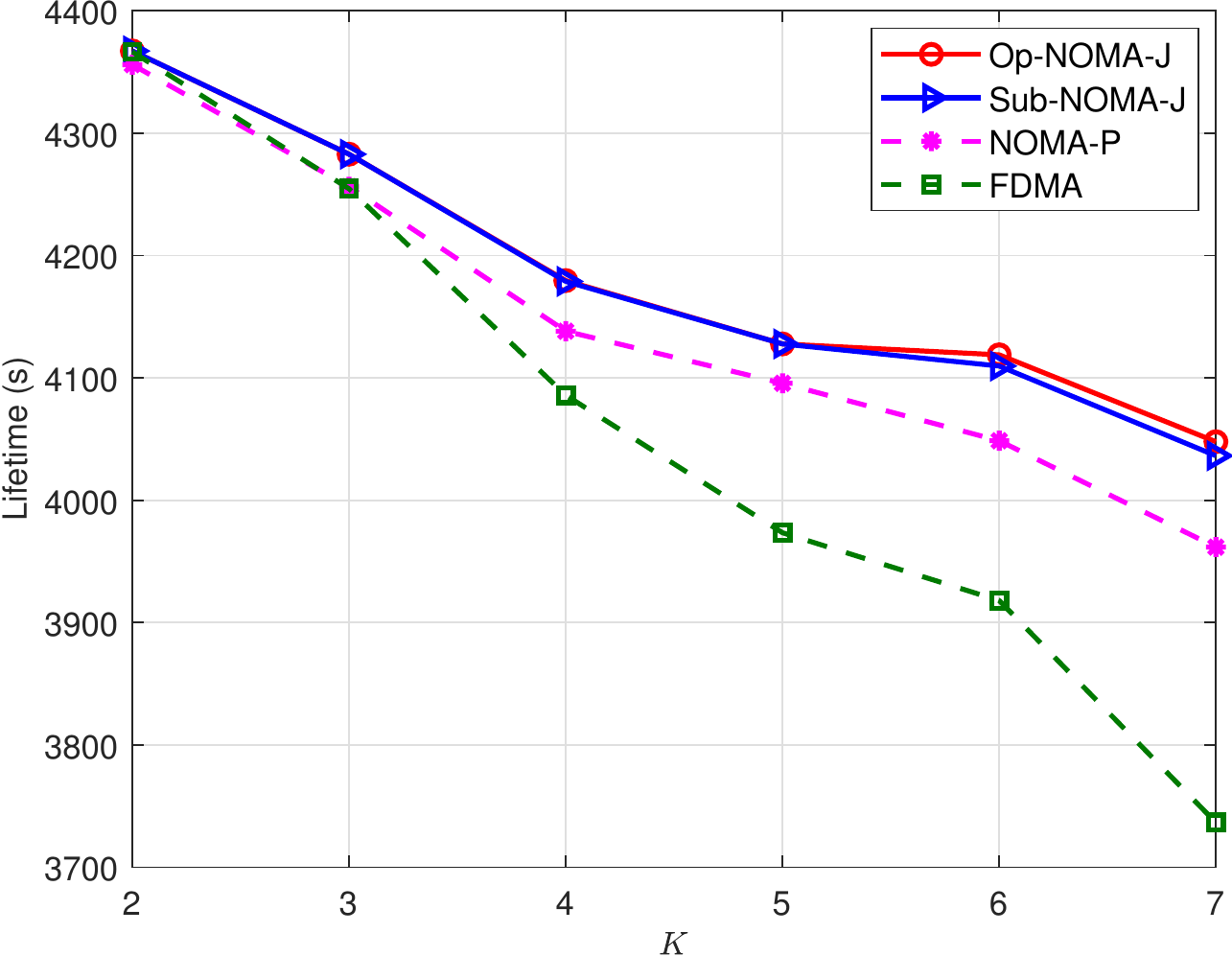}
	\caption{Lifetime versus the device number $K$ for different algorithms.}\label{fig: Compare_value_userNo}
\end{figure}


Fig. \ref{fig: Compare_value_threshold} shows the max-min lifetime versus the interference threshold $I_{th}$ for the different algorithms with $r^*=0.4$ bps/Hz.
It can be observed that when $I_{th}$ is very small (e.g., $I_{th} \leq 22$ dBm for the three NOMA-based algorithms and $I_{th} \leq 23$ dBm for FDMA), the lifetime approaches zero.
This is because there exists at least one IoT device whose transmit power cannot satisfy strict probabilistic interference constraints.
Furthermore, as $I_{th}$ increases, the lifetime of all algorithms gradually increases before getting saturation at a sufficiently large $I_{th}$.
The reason is that when $I_{th}$ is moderate, the available transmit power increases with the relaxation of interference constraints, which provides higher flexibility in power control to achieve a better max-min lifetime.
However, when $I_{th}$ is sufficiently large, the interference constraints become loose while the power constraints are active by the QoS requirement and the maximum transmit power.
Moreover, our proposed NOMA-J outperforms the other algorithms, which also
validates the necessity of joint optimization of the UAV location, transmit power, and decoding order.

Fig. \ref{fig: Compare_value_userNo} shows the max-min lifetime versus the device number $K$ for the different algorithms with $r^*=0.4$ bps/Hz.
It can be observed that as $K$ increases, the performance of FDMA sharply degrades while the three NOMA-based algorithms can maintain acceptable performance.
In addition, the lifetime of  FDMA is superior to that of NOMA-P when $K = 2 $.
However, with the growth of $K$, the performance of  NOMA-P will surpass that of  FDMA, and the lifetime gap between NOMA-P and FDMA becomes larger.
From the figure, the proposed Op-NOMA-J always achieves the highest lifetime, especially when $K$ is large, which further confirms the performance gain brought by NOMA.
Moreover, the performance of Sub-NOMA-J is close to that of Op-NOMA-J when $K \leq 5$, and it can still reach more than $99.6\%$ of the optimal performance when $K=7$.

\section{Conclusion}\label{sec: conclusion}
In this paper, we study the cognitive NOMA uplink transmission in UAV-enabled IoT networks.
Specifically, the minimum lifetime of IoT devices is maximized via jointly optimizing the UAV location, transmit power, and decoding order without violating the QoS and interference constraints with imperfect CSI.
Although the formulated problem is non-convex, we solve it optimally via the Lagrange-duality-based algorithm, which entails high computational complexity.
Furthermore, by equivalently transforming the original problem into a more tractable form, we propose a low-complexity iterative algorithm by leveraging the SCA technique and penalty method, which offers a sub-optimal solution.
Moreover, we also unify two existing decoding order formulations.
Numerical results demonstrate the effectiveness of joint UAV location, transmit power, and decoding order optimization.
In practice, it is more preferable to use the Lagrange-duality-based algorithm in small-scale networks and to apply the iterative algorithm in large-scale networks.
Moreover, future research could focus on extending the problem to more practical scenarios, such as multi-UAV cooperative networks and massive MIMO UAV networks.

\appendices
\section{Proof of Lemma 1}\label{appendix:lemma1}
Define $\beta_m \triangleq p_{m}{h}_{m}$. According to \eqref{equ: power3}, we have
\begin{align}
	\beta_m = \left( 2^{r^*}-1 \right) \left( \sum_{n=m+1}^{K}\beta_n+\sigma^2\right).\label{equ: xequation}
\end{align}
Hence,
\begin{align}
	\beta_m-\left( 2^{r^*}-1 \right)\beta_{m+1}&=\left( 2^{r^*}-1 \right)\left( \sum_{n=m+2}^{K}\beta_n+\sigma^2\right)\nonumber\\
	&\overset{(a)}{=}\beta_{m+1},
\end{align}
where $(a)$ is due to the definition of $\beta_{m+1}$.
Then, we have
\begin{align}
	\beta_m = 2^{r^*}\beta_{m+1}.
\end{align}
Therefore, $\{\beta_m\}_{m=1}^{K}$ forms a proportional sequence with a common ratio $2^{-r^*}$. By referring to the fact that $\beta_K=(2^{r^*}-1)\sigma^2$, we have
\begin{align}
	\beta_m = (2^{r^*}-1)\sigma^2 2^{(K-m)r^*}. \label{equ: xsolution}
\end{align}
Consequently, we have
\begin{align}
	p_{m} = \frac{(2^{r^*}-1)\sigma^2}{{h}_{m}}2^{(K-m)r^*}=c_m(H^2+||\mathbf{q}-w_{m}||^2),\forall m,\nonumber
\end{align}
where $c_m\triangleq \frac{(2^{r^*}-1)\sigma^2}{\rho_0}2^{(K-m)r^*}$.
\hfill $\blacksquare$

\section{Proof of Proposition \ref{lem:lemma3}}\label{appendix:lemma2}
To prove proposition \ref{lem:lemma3}, we need the following Lemmas.
We first define an intermediate variable $X_k\triangleq \sum_{j=1}^{K}\alpha_{k,j}$, whose property is given in Lemma \ref{lem:lemma01}.
\begin{lemma} \label{lem:lemma01}
	For any $k\neq i$, we have $|X_k- X_i|\neq 0.$
\end{lemma}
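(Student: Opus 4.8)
The plan is to prove that the intermediate quantities $X_k \triangleq \sum_{j=1}^{K}\alpha_{k,j}$ are all distinct, i.e., $|X_k - X_i| \neq 0$ for every $k \neq i$. Recall that $X_k$ counts the number of devices whose signals are decoded \emph{after} device $k$ (equivalently, $f(k) = K - X_k$ is the decoding position of device $k$), so distinctness of the $X_k$ is just the statement that no two devices can occupy the same slot in the decoding order. First I would fix an arbitrary pair $k \neq i$ and use constraint \eqref{equ: SICpair}, which gives $\alpha_{k,i} + \alpha_{i,k} = 1$; without loss of generality assume $\alpha_{k,i} = 1$ and $\alpha_{i,k} = 0$ (device $k$ is decoded before device $i$). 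The goal then becomes showing $X_k > X_i$, which forces $|X_k - X_i| \geq 1 > 0$.

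To establish $X_k > X_i$, the key is to compare the two sums term by term using the transitivity constraint \eqref{equ: SICnew}, namely $\alpha_{k,j} + \alpha_{j,i} - 1 \leq \alpha_{k,i}$. The strategy is to show that for every index $j$, whenever device $i$ is decoded before device $j$ (i.e., $\alpha_{i,j} = 1$), device $k$ must also be decoded before $j$ (i.e., $\alpha_{k,j} = 1$). This is exactly a transitivity argument: since $\alpha_{k,i} = 1$ and $\alpha_{i,j} = 1$, constraint \eqref{equ: SICnew} applied with the indices $(k,i,j)$ yields $\alpha_{k,i} + \alpha_{i,j} - 1 = 1 \leq \alpha_{k,j}$, and combined with the binary nature of $\alpha_{k,j}$ (guaranteed once \eqref{equ: SIC} or its smoothed form holds) we get $\alpha_{k,j} = 1$. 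Hence the set of indices counted in $X_i$ is contained in the set counted in $X_k$.

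Finally I would account for the \emph{strict} inequality. The containment argument shows $X_k \geq X_i$, but to get strictness I would exhibit at least one index contributing to $X_k$ but not to $X_i$: the index $j = i$ itself works, because $\alpha_{k,i} = 1$ contributes to $X_k$, whereas the corresponding term $\alpha_{i,k} = 0$ together with $\alpha_{i,i} = 0$ (from \eqref{equ: SICself}) means the ``symmetric'' index is absent from $X_i$. Care is needed to handle the self-terms $\alpha_{k,k} = \alpha_{i,i} = 0$ consistently so the counting is clean, and to make sure the comparison is genuinely between $X_k$ and $X_i$ and not spoiled by a term where $\alpha_{k,j}$ and $\alpha_{i,j}$ behave unexpectedly.

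The main obstacle I anticipate is organizing the term-by-term comparison rigorously: one must partition the index set $\mathcal{K}$ appropriately and verify that each $j$ contributes at least as much to $X_k$ as to $X_i$, with strict surplus on at least one index. The transitivity constraint \eqref{equ: SICnew} does the heavy lifting, but one has to invoke it in the correct orientation $(k,i,j)$ and rely on the binary values of the $\alpha$ variables; the bookkeeping of the self-terms and the pair $(k,i)$ is the delicate part. Once the containment-plus-strict-surplus is in place, the conclusion $|X_k - X_i| \geq 1 \neq 0$ follows immediately, completing the proof.
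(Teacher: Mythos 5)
Your proposal is correct and follows essentially the same route as the paper's proof: both hinge on applying the transitivity constraint \eqref{equ: SICnew} with the pair $(k,i)$ fixed to show that every term of $X_i$ is dominated by the corresponding term of $X_k$, and both extract the strict gap from $\alpha_{k,i}+\alpha_{i,k}=1$ together with $\alpha_{i,i}=0$. The only cosmetic difference is that you phrase the comparison as a set containment with a WLOG on which of $\alpha_{k,i},\alpha_{i,k}$ equals one, whereas the paper writes out the difference $X_k-X_i$ term by term and treats the two symmetric cases explicitly, obtaining $X_k-X_i\leq-1$ or $X_k-X_i\geq 1$.
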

\begin{proof}
	First, for $\forall k \neq i$, we have
	\begin{align}
		X_{k}-X_{i}&=\sum_{j=1}^{K}\alpha_{k, j}- \sum_{j=1}^{K}\alpha_{i, j}\nonumber \\
		=&\alpha_{k, i}-\alpha_{i, k}+\sum_{j=1,j \neq k, i}^{K}\left(\alpha_{k, j}-\alpha_{i, j}\right),
		\label{equ: tmp12}
	\end{align}
	Due to constraints
	\eqref{equ: SIC} and \eqref{equ: SICpair}, we must have $\alpha_{k,i}\neq \alpha_{i, k}$. We will discuss the following two cases.
	\paragraph{$\alpha_{k,i}=0,\alpha_{i, k}=1$}Referring to the fact that $\alpha_{i, j}=1-\alpha_{j, i}$ and constraint \eqref{equ: SICnew}, we have
	\begin{align}
		\alpha_{k, j}-\alpha_{i, j}=\alpha_{k, j}+\alpha_{j, i}-1\leq \alpha_{k, i}=0
	\end{align}
	Notice that in this case, $\alpha_{k, i}-\alpha_{i, k}=-1$, we have
	$X_{k}-X_{i}\leq -1$.
	\paragraph{$\alpha_{k,i}=1,\alpha_{i, k}=0$} Since constraints \eqref{equ: SICnew} holds for $\forall k, i, j$, by exchanging the index of $i$ and $j$ in \eqref{equ: SICnew}, we have
	\begin{align}
		\alpha_{k, i}+\alpha_{i,j}-1\leq \alpha_{k,j}.
	\end{align}
	This leads to
	\begin{align}
		\alpha_{k,j}-\alpha_{i,j}\geq \alpha_{k, i}-1=
		-\alpha_{i, k}=0.
	\end{align}
	In this case, $\alpha_{k, i}-\alpha_{i, k}=1$ holds.
	Thus, we have $X_{k}-X_{i} \geq 1$.
	
	Based on the above two cases, we conclude that
	\begin{align}
		\big|X_{k}-X_{i} \big| \geq 1,\forall k \neq i,
	\end{align}
	which completes the proof.
	\hfill $\blacksquare$
\end{proof}
\begin{lemma} \label{lem:lemma02}
	Define $\alpha^{s}_{f(k),f(j)}=\alpha_{k, j}$ with $f(k)\triangleq K-\sum_{j=1}^{K}{\alpha}_{k,j}$. If $\bm \alpha$ satisfied constraints \eqref{equ: SIC}-\eqref{equ: SICnew}, then we have
	\begin{align}
		{\alpha}^{s}_{f(k), f(j)}=\left\{\begin{array}{ll}{1,} & {\forall f(k)<f(j),} \\ {0,} & {otherwise. } \\ \end{array}\right. \forall k,j.
	\end{align}
\end{lemma}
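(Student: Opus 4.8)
The plan is to prove Lemma \ref{lem:lemma02} by leveraging the key ordering property established in Lemma \ref{lem:lemma01} together with the transitivity structure encoded in constraints \eqref{equ: SIC}--\eqref{equ: SICnew}. First I would recall the definitions: $X_k \triangleq \sum_{j=1}^K \alpha_{k,j}$ counts how many devices have their signal decoded after device $k$, and $f(k) = K - X_k$ is the decoding position of device $k$. The claim is that the relabelled variables ${\alpha}^s_{f(k),f(j)}$ equal $1$ exactly when $f(k) < f(j)$, i.e.\ the decoding-order matrix becomes strictly upper-triangular under the $f$-reindexing. Since $\alpha^s_{f(k),f(j)} = \alpha_{k,j}$ by definition, the statement reduces to showing the equivalence $\alpha_{k,j} = 1 \iff f(k) < f(j)$, or equivalently $\alpha_{k,j} = 1 \iff X_k > X_j$.

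The core of the argument is to show $\alpha_{k,j}=1 \Leftrightarrow X_k > X_j$ for all $k \neq j$. For the forward direction, I would assume $\alpha_{k,j}=1$ (so $\alpha_{j,k}=0$ by \eqref{equ: SICpair}) and show $X_k - X_j \geq 1$. This is precisely the computation already carried out inside the proof of Lemma \ref{lem:lemma01}: writing $X_k - X_j = \alpha_{k,j} - \alpha_{j,k} + \sum_{l \neq k,j}(\alpha_{k,l} - \alpha_{j,l})$ and applying the transitivity constraint \eqref{equ: SICnew} (in the form $\alpha_{k,j} + \alpha_{j,l} - 1 \leq \alpha_{k,l}$) termwise, each summand $\alpha_{k,l} - \alpha_{j,l} \geq 0$, so $X_k - X_j \geq \alpha_{k,j} - \alpha_{j,k} = 1 > 0$. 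For the reverse direction, I would argue by contraposition: if $\alpha_{k,j} = 0$ then $\alpha_{j,k} = 1$, and by the forward direction applied with the roles of $k$ and $j$ swapped we get $X_j > X_k$, hence $X_k > X_j$ fails. Thus $\alpha_{k,j} = 1 \iff X_k > X_j \iff f(k) < f(j)$.

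The main obstacle, such as it is, lies in making the indexing rigorous rather than in any deep inequality: I must verify that $f$ is a genuine bijection on $\mathcal{K}$ so that the relabelling $\alpha^s_{f(k),f(j)} = \alpha_{k,j}$ is well-defined and covers every pair of target indices exactly once. This is where Lemma \ref{lem:lemma01} does the essential work --- it guarantees $|X_k - X_i| \geq 1$ for all $k \neq i$, so the integer-valued $X_k$ (hence $f(k)$) are pairwise distinct. Since the $X_k$ lie in $\{0,1,\dots,K-1\}$ (each $X_k \leq K-1$ because $\alpha_{k,k}=0$ by \eqref{equ: SICself}, and $X_k \geq 0$), $K$ distinct values in a set of size $K$ forces $\{f(1),\dots,f(K)\} = \{1,\dots,K\}$, confirming $f$ is a permutation.

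I would then close the proof by assembling these pieces: having established that $f$ is a bijection and that $\alpha_{k,j}=1 \iff f(k)<f(j)$, I substitute to conclude $\alpha^s_{f(k),f(j)} = 1$ precisely when $f(k) < f(j)$ and $0$ otherwise, which is exactly the claimed form. No new machinery beyond constraints \eqref{equ: SIC}--\eqref{equ: SICnew} and Lemma \ref{lem:lemma01} is required; the proof is essentially a clean repackaging of the case analysis already present in Lemma \ref{lem:lemma01}, now read as a two-way equivalence and combined with the bijectivity of $f$.
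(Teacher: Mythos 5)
Your proof is correct, and its second half takes a genuinely different route from the paper's. Both arguments begin identically: Lemma \ref{lem:lemma01} gives $|X_k-X_i|\geq 1$ for $k\neq i$, and since each $X_k\in\{0,1,\dots,K-1\}$ this forces the $X_k$ (hence the $f(k)$) to be a permutation — that part matches the paper exactly. Where you diverge is in how the upper-triangular structure is extracted. The paper uses the row-sum identity $\sum_{f(j)=1}^{K}\alpha^{s}_{f(k),f(j)}=K-f(k)$ and peels the matrix inductively: the row with $f(k)=1$ must be all ones off the diagonal, constraint \eqref{equ: SICpair} then zeroes the first column, the row with $f(k)=2$ is forced next, and so on. You instead prove the pointwise biconditional $\alpha_{k,j}=1\iff X_k>X_j$ directly: the forward implication is literally Case 2 inside the proof of Lemma \ref{lem:lemma01} (transitivity \eqref{equ: SICnew} with $\alpha_{k,j}=1$ gives $\alpha_{k,l}\geq\alpha_{j,l}$ termwise, so $X_k-X_j\geq 1$), and the reverse follows by swapping $k$ and $j$ via \eqref{equ: SICpair}. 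Your version buys a shorter, non-inductive argument that makes the decoding-position interpretation of $X_k$ explicit and reuses work already done; the paper's counting induction is more mechanical but has the minor virtue of never needing the biconditional to be stated. One small point worth a sentence in a polished write-up: the diagonal case $k=j$ should be dispatched explicitly via \eqref{equ: SICself} ($\alpha_{k,k}=0$ and $f(k)<f(k)$ is false, so the claimed formula holds there too), since the lemma is asserted for all $k,j$.
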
	
\begin{proof}
	Note that $X_{k} \in \{0,1,\cdots,K-1\}$ since $\alpha_{k,j} \in \{0,1\}$.
	By Lemma \ref{lem:lemma01}, we have $|X_k- X_i|\neq 0$, $\forall k\neq i$, then one can easily see that elements in $\{{X}_k\}_{k=1}^K$ correspond to $\{0,1,\cdots,K-1\}$ in a one-to-one manner.
	Therefore, by mapping $k$ to $f(k)$, we can establish a new set $\mathbf{X}^s\triangleq \{{X}^s_{f(k)}\}_{f(k)=1}^K$ with
	\begin{align}
		X_{f(k)}^\text{s}=K-f(k). \label{equ: temp01}
	\end{align}
	Note that $X_k=X^s_{f(k)}$, we have $X_k=X^s_{f(k)}=K-f(k)$, which also determines the definition of $f(k)$, i.e., $f(k)=K-X_k=K-\sum_{j=1}^{K}\alpha_{k,j}$.
	
	
	Then, by letting $\alpha^\text{s}_{f(k),f(j)}=\alpha_{k,j}$,  we can obtain a new sequence $\bm\alpha^\text{s}\triangleq\{\alpha^\text{s}_{f(k),f(j)}, k,j \in \mathcal{K}\}$.	
	In the following, we will give the closed-form expression of $\bm \alpha^s$. Note that $\bm\alpha^\text{s}$ also meets constraints \eqref{equ: SIC}-\eqref{equ: SICnew} and
	\begin{align}
		X_{f(k)}^\text{s}=X_{k}=\sum_{j=1}^{K}\alpha_{k,j}=\sum_{j=1}^{K}\alpha^\text{s}_{f(k),f(j)}=\sum_{f(j)=1}^{K}\alpha^\text{s}_{f(k),f(j)}.\label{equ: temp03}
	\end{align}
Combining \eqref{equ: temp01} and \eqref{equ: temp03}, we obtain 	
\begin{align}
	\sum_{f(j)=1}^{K}\alpha^\text{s}_{f(k),f(j)}=K-f(k), \forall k. \label{equ: temp02}
\end{align}
Then, when $f(k)=1$, we have $\sum_{f(j) \neq 1}^{K}{\alpha}^{\text{s}}_{1,f(j)}=K-1$ owing to ${\alpha}^{\text{s}}_{1,1}=0$, which means
\begin{align}
	{\alpha}^{\text{s}}_{1,f(j)}=1,\forall f(j) >1.\label{equ: tmp14}
\end{align}
Substituting \eqref{equ: tmp14} into constraints \eqref{equ: SICnew}, we have
\begin{align}
	{\alpha}^{\text{s}}_{f(j),1}=0,\forall f(j) > 1. \label{equ: temp04}
\end{align}
Subsequently, when $f(k)=2$, we have $\sum_{f(j) \neq 2}^{K}{\alpha}^{\text{s}}_{2,f(j)}=K-2$. Owing to 	${\alpha}^{\text{s}}_{2,2}=0$ and ${\alpha}^{\text{s}}_{2,1}=0$ from equation \eqref{equ: temp04}, we have
\begin{align}
	{\alpha}^{\text{s}}_{2,f(j)}=1,\forall f(j) >2. \label{equ: tmp15}
\end{align}
Similarly, we have
\begin{align}
	{\alpha}^{\text{s}}_{f(j),2}=0,\forall f(j) \geq 2.
\end{align}
By successively doing the same operations for $f(k) \geq 3$, we have
\begin{align}
	{\alpha}^{\text{s}}_{f(k),f(j)}= 1, \forall f(k)<f(j).\label{equ: tmp16}
\end{align}	
Substituting \eqref{equ: tmp16} into \eqref{equ: SICnew}, we obtain
\begin{align}
	{\alpha}^{\text{s}}_{f(k),f(j)}=\left\{\begin{array}{ll}{1,} & {\text { if } f(k)<f(j),} \\ {0,} & {otherwise,} \\ \end{array}\right. \forall k,j, \label{equ: temp14}
\end{align}	
which completes the proof.	\hfill $\blacksquare$
\end{proof}

Based on Lemma \ref{lem:lemma01} and \ref{lem:lemma02}, we start to prove Proposition \ref{lem:lemma3}. Define
$ {p}^{\text{s}}_{f(k)}={p}_k$,
${h}^{\text{s}}_{f(k)}={h}_{k}$, $\forall k$. Constraints \eqref{equ: qosnew} can be rewritten as
\begin{align}
r^*& \leq \log_2\left(1+\frac{p_{k}h_{k}}{\sum_{j=1,j\neq k}^K {\alpha}_{k,j}{p}_jh_{j}+\sigma^2} \right)\nonumber\\
&= \log_2\left(1+\frac{ p_{k}h_{k}}{\sum_{j=1, f(j)\neq f(k)}^K {\alpha}^{\text{s}}_{f(k),f(j)}{p}^{\text{s}}_{f(j)} {h}^{\text{s}}_{f(j)}+\sigma^2} \right) \nonumber \\
&= \log_2\left(1+\frac{ p_{k}h_{k}}{\sum_{ f(j)= f(k)+1}^K{p}^{\text{s}}_{f(j)} {h}^{\text{s}}_{f(j)}+\sigma^2} \right)\nonumber\\
&= \log_2\left(1+\frac{ p_{k}h_{k}}{\sum_{f(j)= f(k)+1}^K {p}_{j} {h}_{j}+\sigma^2} \right).
\end{align}
\hfill $\blacksquare$
\section{Proof of Proposition \ref{lem:lemma4}}\label{appendix:lemma3}	
We will show that $\{\bm{\pi}',\mathbf p'(\bm \pi),\mathbf q\}$ is a feasible solution to problem \eqref{equ: problem33} by proving that it meets constraints
\eqref{equ: qos0}, \eqref{equ: channelsort} and \eqref{equ: power33}. To this end, set ${\pi}'(m)=f^{-1}(m)=k$ and $\pi'(n)=f^{-1}(n)=j$. Since $\{\bm \alpha, \mathbf p, \mathbf q\}$ is feasible to problem \eqref{equ: newproblem},
we have 
\begin{align}
	r^*
	& \leq \log_2\left(1+\frac{ p_{k}h_{k}}{\sum_{f(j)= f(k)+1}^K {p}_{j} {h}_{j}+\sigma^2} \right) \nonumber \\	
	&=\log_2\left(1+\frac{p'_{\pi'(m)}{h}'_{\pi'(m)}}{\sum_{n=m+1}^{K}p'_{\pi'(n)}{h'}_{\pi'(n)}+\sigma^{2}} \right),\\
	&= \log_2\left(1+\frac{ p_{k}h_{k}}{\sum_{f(j)= f(k)+1}^K {p}_{j} {h}_{j}+\sigma^2} \right)
\end{align}
where we define $h'_{\pi'(m)}=h_k$.
Thus, $\{\bm{\pi}',\mathbf p',\mathbf q\}$ meets constraints \eqref{equ: qos0}.

Secondly, we will consider constraints \eqref{equ: channelsort}.
Define ${d}^{\text{s}}_{f(k)}={d}_{k}$.
Recall the fact that
\begin{align}
	\alpha_{k,j}=\left\{\begin{array}{ll}{0,} & {\text { if } {d}_{k} > {d}_{j},} \\ {1,} & {\text { if } {d}_{k} < {d}_{j},}\\ {0 \text{ or } 1,} & {\text { if } {d}_{k} = {d}_{j},} \end{array}\right. \quad \forall k \neq j,\label{equ: tmp21}
\end{align}
which indicates
\begin{align}
	(2{\alpha}_{k,j}-1)({d}_{k} -{d}_{j})+\big|{d}_{k} -{d}_{j}\big|=0.\label{equ: temp08}
\end{align}
Then, we obtain
\begin{align}
	(2{\alpha}^{\text{s}}_{f(k),f(j)}-1)({d}^{\text{s}}_{f(k)} -{d}^{\text{s}}_{f(j)})+\big|{d}^{\text{s}}_{f(k)} -{d}^{\text{s}}_{f(j)}\big|=0.\label{equ: temp21}
\end{align}
If $f(k)<f(j)$, we can find ${\alpha}^{\text{s}}_{f(k),f(j)}=1$ from Lemma \ref{lem:lemma02}, which leads to
\begin{align}
	{d}^{\text{s}}_{f(k)} -{d}^{\text{s}}_{f(j)}+\big|{d}^{\text{s}}_{f(k)} -{d}^{\text{s}}_{f(j)}\big|=0, \forall f(k)<f(j),
\end{align}
or equivalently,
\begin{align}\label{equ: tmp23}
	{d}^{\text{s}}_{f(k)} \leq {d}^{\text{s}}_{f(j)}, \forall f(k)<f(j).
\end{align}
In other words,
\begin{align}
	{d}_{k} \leq {d}_{j}, \forall f(k)<f(j).\label{equ: tmp13}
\end{align}
Furthermore, referring to the fact that $d'_{\pi'(m)}=d_k$ and $d'_{\pi'(n)}=d_j$, it follows
\begin{align}
	{d}'_{\pi'(m)} \leq {d}'_{\pi'(n)}, \forall m<n,
\end{align}	
which indicates that 	 $\{\bm \pi',\mathbf p',\mathbf q\}$ meets constraints \eqref{equ: channelsort}.

Thirdly, it is obvious that $\{\bm \pi',\mathbf{p}',\mathbf q\}$ satisfies constraints (12c).
In summary, $\{\bm \pi',\mathbf{p}',\mathbf q\}$ is a feasible solution to problem \eqref{equ: problem33}.
Furthermore, define $\bar L_1$ and $\bar L_2$ as the objective value of problem \eqref{equ: problem33} obtained at $\{\bm  \pi',\mathbf{p}',\mathbf q\}$ and that of problem \eqref{equ: newproblem} obtained at $\{\bm \alpha, \mathbf p, \mathbf{q}\}$.
It is obvious that $\bar L_1$ and $\bar L_2$ only depend on $\{p'_{\pi'(m)}, E_{\pi'(m)}\}$ and $\{p_k,E_k\}$, respectively. Since  the same elements are arranged in different orders for $\mathbf p$ and $\mathbf { p}'$, we have $\bar L_1=\bar L_2$. 	
$\hfill\blacksquare$

\section{Proof of Lemma 5}\label{appendix:lemma5}
We will show that $\{\tilde {\bm{\alpha}},\tilde {\mathbf p} , \mathbf{{q}}\}$ is a feasible solution of problem \eqref{equ: newproblem} by proving that it meets constraints \eqref{equ: newpower}-\eqref{equ: SICnew}.
To see this, set $\pi(m)=k$, $\pi(n)=j$ and $\pi(l)=i$. 

First, since $\{{\bm \pi}, \mathbf{p},\mathbf{q}\}$ is feasible to problem \eqref{equ: qosnew}, we have %
\begin{align}
	r^* &\leq\log_2\left(1+\frac{{p}_{\pi(m)}h_{\pi(m)}}{\sum_{n =m+1}^K {p}_{\pi(n)}h_{\pi(n)}+\sigma^2} \right)\nonumber\\
	&=\log_2\left(1+\frac{{p}_{\pi(m)}{h}_{\pi(m)}}{\sum_{m=1,m\neq n}^K \tilde{\alpha}_{\pi(m),\pi(n)}{p}_{\pi(n)}{h}_{\pi(n)}+\sigma^2} \right), \nonumber\\
	&=\log_2\left(1+\frac{\tilde p_{k}{\tilde h}_{k}}{\sum_{j=1,j\neq k}^K \tilde\alpha_{k,j}{p}_j{h}_{j}+\sigma^2} \right),
\end{align}
where we define $\tilde h_k=h_{\pi(m)}$.
Thus, $\{\tilde {\bm{\alpha}},\tilde {\mathbf{{p}}},\mathbf{{q}}\}$ meets constraints \eqref{equ: qosnew}.

Secondly, we discuss constraints \eqref{equ: SICnew} in the following four cases.
\paragraph{$m<n<l$} In this case, we have $\tilde{\alpha}_{\pi(m),\pi(n)}=1$, $\tilde{\alpha}_{\pi(n),\pi(l)}=1$ and $\tilde{\alpha}_{\pi(m),\pi(l)}=1$ such that $\tilde{\alpha}_{\pi(m),\pi(n)}+\tilde{\alpha}_{\pi(n),\pi(l)}-1 =\tilde{\alpha}_{\pi(m),\pi(l)}$.
\paragraph{$l<n<m$} In this case, we obtain $\tilde{\alpha}_{\pi(m),\pi(n)}=0$, $\tilde{\alpha}_{\pi(n),\pi(l)}=0$ and $\tilde{\alpha}_{\pi(m),\pi(l)}=0$ and then $\tilde{\alpha}_{\pi(m),\pi(n)}+\tilde{\alpha}_{\pi(n),\pi(l)}-1 <\tilde{\alpha}_{\pi(m),\pi(l)}$.
p\paragraph{$n<l,n<m$} In this case, we have $\tilde{\alpha}_{\pi(n),\pi(l)}=0$ and $\tilde{\alpha}_{\pi(n),\pi(l)}=1$ such that $\tilde{\alpha}_{\pi(m),\pi(n)}+ \tilde{\alpha}_{\pi(n),\pi(l)}-1 \leq \tilde{\alpha}_{\pi(m),\pi(l)}$ due to $\tilde{\alpha}_{\pi(m),\pi(l)} \geq 0$.
\paragraph{$n>l,n>m$ } In this case, it has $\tilde{\alpha}_{\pi(m),\pi(n)} = 1$ and $\tilde{\alpha}_{\pi(n),\pi(l)}=0$. Hence, $\tilde{\alpha}_{\pi(m),\pi(n)}+ \tilde{\alpha}_{\pi(n),\pi(l)}-1 \leq \tilde{\alpha}_{\pi(m),\pi(l)}$.

Based on the above four cases, we conclude $\{\tilde {\bm \alpha},\tilde{\mathbf{{p}}},\mathbf{{q}}\}$ meets \eqref{equ: SICnew}.

Thirdly, we come to constraints \eqref{equ: SIC}.
Since $\tilde {\bm \alpha}$ always meets constraints \eqref{equ: SIC} when $d_{\pi(m)} = 	d_{\pi(n)}$, we only need to focus on when $d_{\pi(m)} \neq	d_{\pi(n)}$ in the following.
Based on the definition of $\tilde{\alpha}_{\pi(m),\pi(n)} $, we have 
\begin{align}
	\tilde{\alpha}_{\pi(m),\pi(n)} = \frac{1}{2}\left(\frac{\big|n-m\big|}{n-m}+1\right), \forall n \neq m.\label{equ: tmp5}
\end{align}
Referring to the fact that 	$d_{\pi(1)} \leq d_{\pi(2)} \leq \cdots \leq d_{\pi(K)}$, we have
\begin{align}
	\left\{\begin{array}{ll}{d_{\pi(m)} \leq 	d_{\pi(n)},} & {\text { if } m<n,} \\ {d_{\pi(m)} \geq 	d_{\pi(n)},} & {\text { if } m>n. } \end{array}\right.
\end{align}
Thus, we have
\begin{align}
	\frac{\big|n-m\big|}{n-m}=\frac{\big|d_{\pi(n)} -	d_{\pi(m)}\big|}{d_{\pi(n)} -	d_{\pi(m)}}, \forall n, m \in\mathcal{M}_1,\label{equ: tmp6}
\end{align}
where $\mathcal{M}_1 \triangleq \{(n,m )\big| d_{\pi(m)} \neq 	d_{\pi(n)}, n \neq m\}$.
Substituting \eqref{equ: tmp6} into \eqref{equ: tmp5}, we have
\begin{align}
	\tilde{\alpha}_{\pi(m),\pi(n)} = \frac{1}{2}\left(\frac{\big|d_{\pi(n)} -	d_{\pi(m)}\big|}{d_{\pi(n)} -	d_{\pi(m)}}+1\right) ,\forall n, m \in\mathcal{M}_1 .\label{equ: tmp7}
\end{align}
This indicates
\begin{align}
	\tilde{\alpha}_{\pi(m),\pi(n)}=\left\{\begin{array}{ll}{0,} & {\text { if } d_{\pi(m)} > 	d_{\pi(n)},} \\ {1,} & {\text { if } d_{\pi(m)} <	d_{\pi(n)}.} \end{array}\right. \quad \forall n, m \in\mathcal{M}_1,
\end{align}
Hence, $\{\tilde {\bm \alpha},\tilde{\mathbf{{p}}},\mathbf{{q}}\}$ meets constraints \eqref{equ: SIC}.

Finally,
it is obvious that $\{\tilde {\bm \alpha},\tilde{\mathbf{{p}}},\mathbf{{q}}\}$ meets constraints \eqref{equ: newpower}, \eqref{equ: SICself} and \eqref{equ: SICpair}.
In summary, $\{\tilde {\bm \alpha},\tilde{\mathbf{{p}}},\mathbf{{q}}\}$ is a feasible solution to problem \eqref{equ: newproblem}.

Furthermore, define $\hat L_1$ and $\hat L_2$ as the objective value of problem \eqref{equ: problem33} obtained at $\{\bm \pi,\mathbf p,\mathbf q\}$ and that of problem \eqref{equ: newproblem} obtained at $\{\tilde {\bm \alpha},\tilde{\mathbf{{p}}},\mathbf{{q}}\}$.
For the same reason in Appendix C, we have $\hat L_1=\hat L_2$.
$\hfill\blacksquare$

\bibliographystyle{IEEEtran}
\bibliography{IEEEtran.bib}

\end{document}